\newtheorem{theorem}{Theorem}
\newtheorem{definition}{Definition}
\newtheorem{proposition}[theorem]{Proposition}
\newtheorem{corollary}{Corollary}
\newtheorem{lemma}[theorem]{Lemma}  
\newtheorem{claim}[theorem]{Claim}
\newtheorem{observation}[theorem]{Observation}
\title{Algorithms and complexity for geodetic sets on partial grids \thanks{This research was supported by the IFCAM project ``Applications of graph homomorphisms'' (MA/IFCAM/18/39), by the ANR project HOSIGRA (ANR-17-CE40-0022), and by an ISIRD Grant from Sponsored Research and Industrial Consultancy, IIT Kharagpur. This paper contains the full versions of parts of an extended abstract from the proceedings of ISAAC 2020~\cite{ISAAC-version}.}} 
\author{Dibyayan Chakraborty\thanks{School of Computing, University of Leeds, United Kingdom.} \and  Harmender Gahlawat \thanks{Ben-Gurion University of the Negev, BeerSheba, Israel.}  \and Bodhayan Roy \thanks{Indian Institute of Technology, Kharagpur. } }
\begin{document}

\maketitle

\begin{abstract}
 A set $S$ of vertices of a graph $G$ is a \emph{geodetic set} if every vertex of $G$ lies in a shortest path between some pair of vertices of $S$. The \textsc{Minimum Geodetic Set (MGS)} problem is to find a geodetic set with minimum cardinality of a given graph.  A \emph{grid embedding} of a graph is a set of points in two dimensions with integer coordinates such that each point in the set represents a vertex of the graph and, for each edge, the points corresponding to its endpoints are at Euclidean distance~$1$. A graph is a \emph{partial grid} if it has a grid embedding. In this paper, we first prove that \textsc{Minimum Geodetic Set} remains NP-hard even for subcubic partial grids of arbitrary girth. This jointly strengthens three existing hardness results: for bipartite graphs (Dourado et al., Discrete. Math, 2010), subcubic graphs (Bueno et al., Inf. Process. Lett., 2018)~\cite{bueno2018}, and planar graphs (Chakraborty et al., CALDAM, 2020). 

 { The \emph{area} of an internal face is the number of integer points lying on the boundary or interior of the face.}
  A  graph is a \emph{solid grid} if it has a grid embedding such that all interior faces have area exactly four. To complement the above hardness result, we design a linear-time algorithm for \textsc{Minimum Geodetic Set} on solid grids, improving on a $3$-approximation algorithm by Chakraborty et al. (CALDAM, 2020).

Our results hold for \textsc{Edge Geodetic Set} as well. A set $S$ of vertices of a graph $G$ is a \emph{geodetic set} if every edge of $G$ lies in a shortest path between some pair of vertices of $S$. The \textsc{Minimum Edge Geodetic Set (MEGS)} problem is to find an edge geodetic set with minimum cardinality of a given graph. As corollaries, we obtain that \textsc{MEGS} remains NP-hard on partial grids and is linear-time solvable on solid grids.
  
  \medskip \noindent \textbf{Keywords}: Geodetic set, partial grids, solid grids, NP-hardness, linear time algorithm.
\end{abstract}

\section{Introduction}

A simple undirected graph $G$ has vertex set $V(G)$ and edge set $E(G)$. For two vertices $u,v\in V(G)$, let $I(u,v)$ denote the set of all vertices in $G$ that lie in some shortest path between $u$ and $v$.  For a subset $S$ of vertices of a graph $G$, let $I(S)=\bigcup_{u,v\in S} I(u,v) $. We say that $T\subseteq V(G)$ is \emph{covered} by $S$ if $T\subseteq I(S)$. A set of vertices $S$ is a \emph{geodetic set} if $V(G)$ is covered by $S$. The \emph{geodetic number}, denoted as $gn(G)$, is the minimum integer $k$ such that $G$ has a geodetic set of cardinality $k$. Given a graph $G$, the \textsc{Minimum Geodetic Set (MGS)} problem, introduced in~\cite{harary1993}, is to compute a geodetic set of $G$ with minimum cardinality. In this paper, we study the computational complexity of \textsc{Minimum Geodetic Set} in subclasses of planar graphs. \textsc{Minimum Geodetic Set} is a natural graph covering problem that falls in the class of problems dealing with the important geometric notion of \emph{convexity}: see~\cite{farber1986,bookGC} for some general discussion of graph convexities. The setting of \textsc{Minimum Geodetic Set} is quite natural, and it can be applied to facility location problems such as the optimal determination of bus routes in a public transport network etc.~\cite{caldam2020,bus}. See also~\cite{ekim2012} for further applications. The aim of this paper is to study \textsc{Minimum Geodetic Set} on \emph{partial grids}. A \emph{grid embedding} of a graph is a set of points in two dimensions with integer coordinates such that each point in the set represents a vertex of the graph and, for each edge, the points corresponding to its endpoints are at Euclidean distance~$1$. A graph is a \emph{partial grid} if it has a grid embedding. {The \emph{area} of an internal face is the number of integer points on the boundary or interior of the face.} A  graph is a \emph{solid grid} if it has a grid embedding such that all interior faces have area exactly four.

The algorithmic complexity of \textsc{Minimum Geodetic Set} has been studied intensively. \textsc{Minimum Geodetic Set} is known to be NP-hard on \emph{chordal} graphs~\cite{JCMCC96}, and \emph{(chordal) bipartite} graphs~\cite{missingBipartite,dourado2008,dourado2010}, \emph{subcubic} graphs~\cite{bueno2018}, \emph{planar} graphs~\cite{caldam2020}, \emph{co-bipartite} graphs~\cite{ekim2012}. From the perspective of parameterized complexity, \textsc{Minimum Geodetic Set} remains W[1]-hard for the parameters solution size, \emph{feedback vertex set number}, and \emph{pathwidth}, all three parameters combined~\cite{kellerhals2020}. From the viewpoint of approximability, \textsc{Minimum Geodetic Set} 
remains LOG-$\mathcal{APX}$ hard even on subcubic bipartite graphs~\cite{davot2021approximation}.  In this paper, we jointly strengthen three existing NP-hardness results: for bipartite graphs~\cite{dourado2010}, subcubic graphs~\cite{bueno2018}, and planar graphs~\cite{caldam2020}, by proving the following theorem.

\begin{theorem}\label{thm:hardness-grid}
	\textsc{Minimum Geodetic Set} is NP-hard for subcubic partial grids of girth at least $g$, for any fixed integer $g\geq 4$.
\end{theorem}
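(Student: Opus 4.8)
The plan is to reduce from a known NP-hard version of \textsc{MGS}, most naturally from \textsc{MGS} on subcubic graphs (Bueno et al.~\cite{bueno2018}) or from \textsc{MGS} on planar graphs (Chakraborty et al.~\cite{caldam2020}); the cleanest starting point is \textsc{MGS} on subcubic \emph{planar} graphs, which can be obtained by combining those techniques (or one simply reduces from the planar construction and argues the degree bound survives). Given such an instance $G$, I would first fix a planar embedding and then produce a \emph{grid drawing} of $G$ in the style of classical VLSI/planar-routing results: every vertex is placed at a grid point and every edge is drawn as an axis-parallel path of grid segments, with distinct edges using internally disjoint paths. This turns each original edge into a path; the core of the argument is to control, simultaneously, (i) the \emph{lengths} of these replacement paths so that shortest-path structure is preserved, (ii) the \emph{maximum degree}, keeping it at most~$3$, and (iii) the \emph{girth}, making every cycle of the resulting partial grid have length at least~$g$.

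Concretely, I would \emph{subdivide} edges: replace each edge $uv$ of $G$ by a path whose length is a common large even value $2L$ (padding shorter routed paths up to length $2L$ by inserting ``detour'' bends along the grid, which is possible since a routed path can always be locally lengthened by an even amount using a small rectangular detour on a free grid cell). Choosing $2L$ large and uniform across all edges guarantees that for the subdivided graph $G'$, distances between (images of) original vertices are exactly $2L$ times the corresponding distances in $G$, so a set of original vertices covers $V(G)$ in $G$ iff a related set covers the original-vertex images in $G'$. Uniform even subdivision also forces girth $\geq 2L \geq g$ as long as $2L\geq g$, since every cycle of $G'$ traverses at least two subdivided edges. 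The subdivision vertices themselves are degree-$2$, and with a careful routing one keeps the original vertices at degree $\le 3$, so $G'$ is a subcubic partial grid.

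The remaining (and main) obstacle is the usual one in such reductions: the newly introduced subdivision/degree-$2$ vertices must themselves be \emph{covered} by the geodetic set, and one must ensure this does not change the optimum in an uncontrolled way. Every internal vertex of the path replacing an edge $uv$ lies on the unique shortest $u$–$v$ path in $G'$ (uniqueness is arranged because the routed path realizing distance $2L$ is the only length-$2L$ walk, all alternative routes being strictly longer thanks to the padding), so such vertices are automatically in $I(\{u,v\})$ provided \emph{both} endpoints, or vertices ``beyond'' them, are selected. I would therefore attach a small constant-size \emph{pendant gadget} (a short attached path ending in a simplicial/degree-$1$ vertex, itself routed on the grid and padded to preserve girth) at each original vertex, forcing, in any geodetic set, the selection of a vertex that makes every incident subdivided edge internally covered; then a standard exchange argument shows $g(G') = g(G) + (\text{number of gadgets})$, or an equally explicit additive constant. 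Verifying this equivalence carefully — in particular that optimal solutions can be assumed to pick the gadget tips, and that covering all subdivision vertices costs nothing extra beyond the gadgets — is where the bulk of the technical work lies; the geometric routing and girth bookkeeping, while tedious, are routine once the uniform even-subdivision scheme is fixed.
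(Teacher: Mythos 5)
Your proposal has a genuine gap, and it sits exactly where you locate ``the bulk of the technical work.'' First, note that your intended correspondence collapses once you add the pendant gadgets. After a uniform subdivision by $2L$, the path replacing an edge $uv$ of $G$ is the \emph{unique} shortest $u$--$v$ path in $G'$ (any other $u$--$v$ route has length at least $4L$). Hence if you attach a pendant path ending in a degree-$1$ tip at \emph{every} original vertex, the set of all tips (which is forced into any geodetic set) already covers everything: each tip-to-tip shortest path runs through the two pendant paths and the subdivided edge between the endpoints, so all subdivision vertices, all pendant vertices and all original vertices are covered. Thus $g(G')$ equals the number of original vertices, independently of $g(G)$, and the claimed identity $g(G') = g(G) + (\text{number of gadgets})$ is false; the reduction carries no information. (There is also a degree problem: attaching a pendant at a vertex of degree~$3$ gives degree~$4$, violating subcubicity.) Without the pendants the difficulty is the opposite one: covering the subdivision vertices of an edge $uv$ requires that the edge $uv$ lie on a shortest path between two \emph{selected} vertices, i.e.\ uniform subdivision preserves geodetic sets only for sets that are simultaneously geodetic and \emph{edge geodetic} (this is precisely the content of the paper's Lemma~\ref{lem:subdivide}), and a minimum geodetic set of $G$ need not have this property, so $g(G')$ need not relate to $g(G)$ in any controlled way. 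Finally, your starting point is shaky: NP-hardness of \textsc{MGS} on subcubic \emph{planar} graphs is not available in the cited literature (Bueno et al.\ give subcubic, Chakraborty et al.\ give planar), so ``combining those techniques'' is itself something you would have to prove.

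For comparison, the paper avoids both problems by not reducing from \textsc{MGS} at all: it reduces from \textsc{Minimum Vertex Cover} on cubic planar graphs, replaces each vertex by a bespoke gadget whose three degree-one tips are forced into the solution and whose shortest-path structure ties the covering of the gadget's central vertices to whether the gadget (or a neighbouring one) hosts an extra solution vertex, and proves $g(f_1(G)) = 3|V(G)|+k$ iff $G$ has a vertex cover of size $k$. The girth and grid constraints are then handled by explicit re-routing ($f_2(G)$) together with the observation (Lemma~\ref{lem:reduction-2}) that the constructed optimal set is also an optimal \emph{edge} geodetic set, which is exactly the hypothesis needed for the subdivision step $f_3(G)$ to preserve the geodetic number. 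If you want to salvage your approach, you would need gadgets that force coverage of the new degree-$2$ vertices without trivialising the instance --- which in effect recreates the vertex-cover-style gadgeteering of the paper.
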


Note that partial grids are subclasses of many other important graph classes such as \emph{disk} graphs, \emph{rectangle intersection} graphs, etc.~\cite{clark1990,thomassen1986}. Hence, our result implies that \textsc{Minimum Geodetic Set} remains NP-hard on the aforementioned graph classes.


On the positive side, polynomial-time algorithms to solve \textsc{Minimum Geodetic Set} are known for \emph{cographs}~\cite{dourado2010}, \emph{split} graphs~\cite{dourado2010,JCMCC96} and more generally \emph{well-partitioned chordal} graphs~\cite{wellpart}, \emph{ptolemaic} graphs~\cite{farber1986} and more generally \emph{distance-hereditary} graphs~\cite{dh}, \emph{block-cactus} graphs~\cite{ekim2012}, \emph{outerplanar} graphs~\cite{mezzini2018} and \emph{proper interval} graphs~\cite{ekim2012}. The problem is fixed-parameter-tractable for parameters \emph{tree-depth}, \emph{modular-width} and \emph{feedback edge set number}~\cite{kellerhals2020}. Chakraborty et al.~\cite{caldam2020} gave a $3$-approximation algorithm for \textsc{Minimum Geodetic Set} on solid grids. We improve this as follows.

\begin{theorem}\label{thm:solid-grid}
	There is a linear-time algorithm for \textsc{Minimum Geodetic Set} on solid grids.
\end{theorem}

We note that researchers have designed polynomial-time algorithms for various problems on solid grids~\cite{feldmann2011,keshav2012,umans1997}.

Next, we establish that our results also hold for a related problem, \textsc{Minimum Edge Geodetic Set}. A set $S$ of vertices of a graph is an \emph{edge geodetic set} if every edge lies in some shortest path between some vertices in $S$. Note that an edge geodetic set is also a geodetic set. Given a graph $G$, the problem \textsc{Minimum Edge Geodetic Set}, introduced independently in~\cite{atici2003edge} and \cite{santhakumaran2007edge}, asks to compute an edge geodetic set with minimum cardinality. The computational complexity of \textsc{Minimum Edge Geodetic Set} has been heavily studied~\cite{anand2018edge,chartrand2000geodetic, davot2021approximation,  rehmani2019edge}. In particular, we have the following results.

\begin{corollary}~\label{C:hardness}
\textsc{Minimum Edge Geodetic Set} is NP-hard for subcubic partial grids of girth at least $g$, for any fixed integer $g\geq 4$.
\end{corollary}

\begin{corollary}~\label{C:algo}
There is a linear-time algorithm for \textsc{Minimum Edge Geodetic Set} on solid grids.
\end{corollary}


\medskip\noindent \textbf{Organisation:} In Section~\ref{sec:grid-hard}, we prove Theorem~\ref{thm:hardness-grid} and Corollary~\ref{C:hardness}. In Section~\ref{sec:solid-grid}, we prove Theorem~\ref{thm:solid-grid} and Corollary~\ref{C:algo}.

\section{Hardness for partial grids}\label{sec:grid-hard}

We now prove Theorem~\ref{thm:hardness-grid}. Let $\mathcal{PG}(3,g)$ denote the class of subcubic partial grids of girth at least~$g$. Given a graph $G$, a subset $S\subseteq V(G)$ is a \emph{vertex cover} of $G$ if every edge in $E(G)$ has at least one end-vertex in $S$. The problem \textsc{Minimum Vertex Cover} is to compute a vertex cover of an input graph $G$ with minimum cardinality. To prove Theorem~\ref{thm:hardness-grid}, we reduce the NP-complete \textsc{Minimum Vertex Cover} on cubic planar graphs~\cite{mohar2001face} to \textsc{Minimum Geodetic Set} on graphs in $\mathcal{PG}(3,g)$.

\subsection{Important lemma}

We use a result of Valiant~\cite{valiant1981} which was stated by Clark et al.~\cite{clark1990} in the following form.

\begin{theorem}[\cite{valiant1981,clark1990}]\label{valiant}
    A planar graph $G$ with maximum degree~$4$ can be embedded in the plane using $O(|V(G)|)$ area in such a way that its vertices are at integer coordinates and its edges are drawn so that they are made up of line segments of the form $x = i$ or $y = j$, for integers $i$ and $j$.
\end{theorem}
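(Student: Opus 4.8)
The plan is to combine a topological planar embedding with a rectilinear (orthogonal) grid drawing, and then to control the bounding-box area by a recursive layout built on a planar separator. Throughout, write $n = |V(G)|$ and recall that $G$ is planar with maximum degree $4$, so $G$ has at most $2n$ edges. The statement has a qualitative part (vertices at integer coordinates, edges made of segments $x=i$ or $y=j$) and a quantitative part (area $O(n)$); I would establish the former first and then spend the real effort on the latter.

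For the qualitative part I would fix a combinatorial planar embedding of $G$, i.e. a rotation system recording at each vertex the cyclic order of its incident edges. The key structural fact is that maximum degree $4$ is exactly the threshold at which a vertex can sit at a grid point with its at most four incident edges leaving along the four axis directions $\{+x,-x,+y,-y\}$ compatibly with that cyclic order. Assigning directions to half-edges consistently with the rotation system gives the skeleton of a \emph{planar orthogonal drawing}: every vertex is a point, every edge is a connected union of horizontal and vertical unit segments (with possibly a few bends), and no two edges cross. Existence of such a drawing I would argue by processing the vertices in an $st$-ordering and inserting them one at a time, routing each new edge as a monotone staircase of axis-parallel segments while maintaining the invariant that the current drawing is planar, orthogonal, and leaves every still-unsaturated port accessible on its outer frontier. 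This already forces vertices onto integer coordinates and edges onto segments of the form $x=i$ or $y=j$.

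The quantitative part is the real content. By the planar separator theorem one finds a set $S$ of $O(\sqrt n)$ vertices whose removal splits $G$ into parts of size at most $\tfrac 23 n$; I would recursively lay out each part inside its own rectangle, pack the recursively produced blocks together, and route the $O(\sqrt n)$ edges incident to $S$ through boundary channels. Writing $A(n)$ for the area, this yields a recurrence of the form $A(n) \le A(n_1) + A(n_2) + O\bigl(\ell(n)\sqrt n\bigr)$ with $n_1+n_2 \le n$ and $n_1,n_2 \le \tfrac 23 n$, where $\ell(n) = O(\sqrt n)$ is the channel length carrying the separator wires. One must check that each recursive block lands on the integer grid, that merging two integer-coordinate blocks only shifts them by integer amounts (so integrality is preserved), and that the $O(\sqrt n)$ separator wires route orthogonally and without crossings using the slack of the channel together with the bounded degree to avoid congestion.

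The main obstacle is exactly the area analysis. The naive orthogonal drawing from the first step only bounds the bounding box by $O(n)\times O(n)$, i.e.\ $O(n^2)$ area, so the separator recursion is essential; moreover the channel cost must be accounted for very tightly, because a crude summation of the per-level channel areas over the $O(\log n)$ levels of the recursion only gives a near-linear bound, and squeezing this down to the claimed $A(n) = O(n)$ requires the sharper bifurcator-style accounting that exploits the geometric decay of the separator sizes $O(\sqrt n), O(\sqrt{n/2}),\dots$ along each root-to-leaf path together with a balanced (square) packing of the recursive blocks. A secondary technical point is reconciling the global routing with bends: since the statement permits each edge to be a union of axis-parallel segments rather than a single segment, bends are allowed, but one must still verify that their number and placement respect planarity and the integer grid and that each bend-point can be charged to a grid point already counted in the channel, so that they do not inflate the area beyond the claimed bound. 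Once the recursion and its area recurrence are verified, the theorem follows.
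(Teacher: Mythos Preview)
The paper does not prove this theorem at all: it is quoted verbatim from Valiant's 1981 paper and used as a black box in the reduction of Section~\ref{sec:grid-hard}. There is therefore no ``paper's own proof'' to compare your attempt against.

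That said, your sketch is broadly faithful to how Valiant's original argument actually proceeds: the linear-area bound is obtained by a recursive layout driven by planar separators, with careful accounting of the channel widths needed to route the separator edges between recursively embedded subgraphs. Your identification of the main obstacle---that a naive orthogonal drawing gives only $O(n^2)$ area, and that the separator recursion must be analyzed tightly to avoid a superfluous logarithmic factor---is exactly right. If you were asked to reproduce Valiant's proof in full, the points that would need the most care are (i) making precise the ``bifurcator'' structure that guarantees the separator sizes decay geometrically along every root-to-leaf path of the recursion tree, and (ii) the packing argument that keeps the recursive blocks roughly square so that the recurrence $A(n) \le A(n_1) + A(n_2) + O(\sqrt{n})\cdot O(\sqrt{n})$ actually solves to $A(n) = O(n)$ rather than $O(n\log n)$. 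But for the purposes of the present paper none of this is required: the theorem is simply invoked.
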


In the above theorem, ``area'' refers to the number of integer points of the grid that are covered by the edges or the internal faces of the embedding of the graph. A graph $H$ is an \emph{equal subdivision} of a graph $G$ if there exists an integer $\ell$ such that $H$ can be obtained by replacing each edge of $G$ by a path of length $\ell$. Below we use  Theorem~\ref{valiant} to prove the following.

{
\begin{lemma} \label{lem:good-embedding}
    Let $G$ be a planar graph with maximum degree~$4$. Then there exists a partial grid graph $H$, which is an equal subdivision of $G$ and contains at most $O(|V(G)|^3)$ vertices.
\end{lemma}
\begin{proof}

    First, we introduce the following definition. Consider a square $S$ whose sides are of length $a$ units for some even integer $a$. {Observe that given any even integer $b$ that satisfies $a\leq b\leq (a+1)^2-1$,} it is possible to find a rectilinear path $P$ whose length is $b$ units, $P$ lies entirely inside $S$ and the left and right endpoints of $P$ are the left bottom and right bottom corners of $S$, respectively. We call $P$ a ``zigzag expansion of the bottom segment'' of $S$. See Figure~\ref{fig:zigzag} for an example. 
    
    Now we describe the proof of the lemma. Using Theorem~\ref{valiant}, we can get an embedding $\mathcal{R}$ such that its vertices are at integer coordinates and its edges are drawn so that they are made up of line segments of the form $x=i$ or $y=j$. 
    Furthermore, the dimensions of the grid, as well as the lengths of the edges, are $O(|V(G)|)$. So, let $k_1n$ be the maximum length of an edge in the grid embedding obtained through Theorem~\ref{valiant}, where $n=|V(G)|$. (If $k_1$ is not an integer, consider $k_1=\lceil k_1 \rceil$.) The minimum length of an edge is at least $1$. We scale up the embedding, both, horizontally and vertically by a factor of $8nk_1$. 
    Now the longest edge has a length  $\ell=8k_1^2n^2$, and the shortest edge has a length of at least  $8k_1n$. Note that due to the scaling factor, all edges have even lengths. Furthermore, due to the scaling, any previously maximal horizontal or vertical unit segment (say $s$) of an edge has expanded to a length of at least  $8k_1n$ and has squares ($S_1,S_2$) of dimension $8k_1n$ on both sides. Consider the square $S \in \{S_1,S_2\}$ that lies completely above (resp., on the right) of a horizontal (resp., vertical) segment $s$. (The fact that $S$ lies completely above (resp., on the right) of $s$ can be assumed without loss of generality). We say that $S$ is the  \textit{expansion square} of the segment $s$. Moreover, observe that although a horizontal (resp., vertical) segment $s$ has a unique expansion square, a square $S$ can simultaneously be an expansion square of some horizontal segment $s$ as well as some vertical segment $s'$.
    
{ 
    Now, consider an expansion square $S$ of a horizontal segment $s$. Inside $S$, we define two disjoint \textit{zigzag squares} $Z_h$ and $Z_v$ in the following manner. For ease of presentation, let the coordinates of $S$ be as follows: the bottom left point is $(0,0)$, and the top right point is $(8k_1n,8k_1n)$.  Now, the zigzag square $Z_h$ has the bottom left point at $(4k_1n,0)$ and top right point at $(8k_1n-2,4k_1n-2)$, and the zigzag square $Z_v$ has the bottom left point at $(0,4k_1n)$ and top right point at $(4k_1n-2,8k_1n-2)$. See Figure~\ref{fig:Zv} for an illustration. Observe that both squares $Z_h$ and $Z_v$ have a side of length $4k_1n-2$ each. Therefore, in the zigzag square $Z_h$ (resp., $Z_v$), we can find a zigzag expansion of the bottom segment of $Z_h$ (resp., left segment of $Z_v$) of any even length $b$, where $b$ satisfies $4k_1n-2 \leq b \leq (4k_1n-1)^2-1= 16k_1^2n^2-8k_1n$ (here, $a= 4k_1n-2$). Moreover, it is easy to observe that the squares $Z_h$ and $Z_v$ are indeed disjoint.
}

    Since the maximum difference between the lengths of the longest and shortest edges is $8k_1^2n^2- 8k_1n$, observe that, a zigzag expansion of a horizontal (resp., vertical) segment $s$ in the zigzag square $Z_h$ (resp. $Z_v$) increases the length of the edge to the maximum length $\ell$. Moreover, since each segment uses its unique and disjoint zigzag square, these zigzag paths have no intersection. After  equalizing the length of all edges in the embedding $\mathcal{R}$, an equal subdivision $H$ of $G$ can be found by replacing each edge of $G$ by a path of length $\ell$. Clearly, $H$ is a partial grid. Since the length of each edge is $O(n^2)$, and there are $O(n)$ edges, the total number of vertices in $H$ is $O(|V(G)|^3)$.
\end{proof}}


\begin{figure}
    \centering
    \includegraphics[scale=1.5]{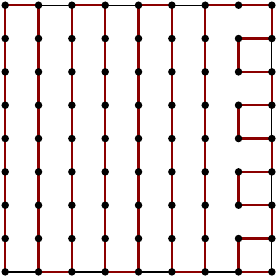}
    \caption{Here $a = 8$. We display a zigzag expansion of the bottom segment of length 80 in color red. We display the lattice points for convenience.}
    \label{fig:zigzag}
\end{figure}

\begin{figure}
        \centering
        \includegraphics[scale = 1]{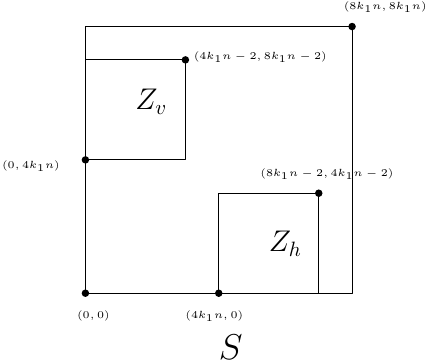}
        \caption{Illustration for zigzag squares.}
        \label{fig:Zv}
    \end{figure}




Next, we have the following lemma.
\begin{lemma}\label{lem:subdivide}
    Let $G$ be a graph having an edge geodetic set $S$. Let $H$ be an equal full subdivision of $G$. Then $S$ is an edge geodetic set of $H$. 
\end{lemma}

\begin{proof}
Let $uv$ be an edge of $H$. Observe that $uv$ must belong to some path that was introduced to replace an edge $e$ of $G$. Let $u_e,v_e\in S$ be such that $e$ belongs to a shortest path $P$ between $u_e$ and $v_e$. Let $P'$ be the path obtained by replacing each edge of $P$ with a path having $\ell$ edges. Observe that $P'$ is a shortest path between $u_e$ and $v_e$ in $H$. Hence, $uv$ belongs to a shortest path between $u_e$ and $v_e$ in $H$.  Thus, $S$ is an edge geodetic set of $H$.
\end{proof}

{ We note that the converse of Lemma~\ref{lem:subdivide} may not be true in general. See Figure~\ref{fig:converse-not-true} for an example.}
\begin{figure}
    \centering
    \includegraphics{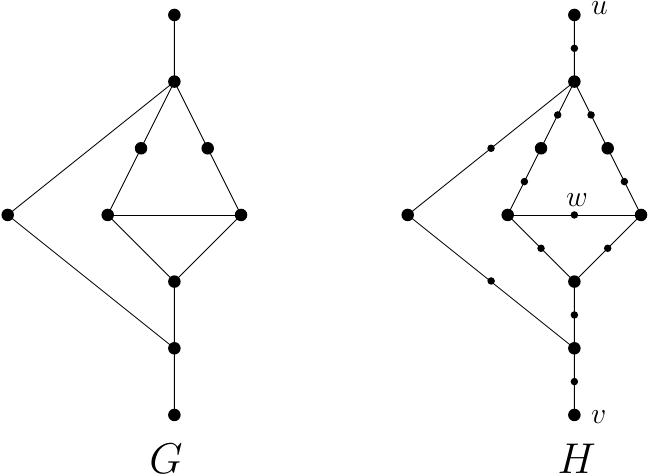}
    \caption{Here, $H$ is an equal subdivision of $G$. Observe that while the edge geodetic number of $G$ is at least $4$, $\{u,v,w\}$ is an edge geodetic set of $H$ of size $3$.}
    \label{fig:converse-not-true}
\end{figure}

\subsection{Overview of the reduction} 

Given a cubic planar graph $G$, first, we construct a planar graph $f_1(G)$ having maximum degree at most~$6$. To construct $f_1(G)$, we essentially replace the vertices of $G$ with a gadget $G_v$ (shown in Figure~\ref{fig:grid-reduction-1}).  We show that $G$ has a vertex cover of size $k$ if and only if $f_1(G)$ has an (edge) geodetic set of size $3|V(G)| + k$. Then we construct a partial grid $f_2(G)$ such that the (edge) geodetic numbers of $f_2(G)$ and $f_1(G)$ are equal. To construct  $f_2(G)$, first, we apply Lemma~\ref{lem:good-embedding} on $G$ to get a partial grid $H$ which is an equal subdivision of $G$. Let us call the vertices of $G$ that are also present in $H$ as the \textit{original vertices}. Then we replace the original vertices of $H$ with the gadget shown in Figure~\ref{fig:grid-reduction-2}, which is motivated by the construction of $G_v$ and exhibits the same properties needed to conclude the result. Finally, if the girth of the resulting graph is less than the value of $g$ in the statement of Theorem~\ref{thm:hardness-grid}, then we consider an equal subdivision of $f_2(G)$ that has girth at least $g$. {  We show using  Lemma~\ref{lem:subdivide} and some other observations that the (edge) geodetic set of the constructed graph remains unchanged, and conclude.}


\subsection{Reduction and proofs for $f_1(G)$: The first step}

From a cubic planar graph $G$ with a given planar embedding $\mathcal{R}$, we construct the graph $f_1(G)$ as follows. Each vertex $v$ of $G$ will be replaced by a \emph{vertex gadget} $G_v$ which is shown in Figure~\ref{fig:grid-reduction-1}. The edges outside of the vertex-gadgets will depend on $\mathcal{R}$. We assume that the edges incident with each vertex $v$ of $G$ are labeled $e_i^v$ with $0\leq i < 3$, in such a way that the numbering increases counterclockwise around $v$ with respect to the embedding (thus the edge $vw$ will have two labels: $e_i^v$ and $e_j^w$). Consider two vertices $v$ and $w$ that are adjacent in $G$, and let $e_i^v$ and $e_j^w$ be the two labels of edge $vw$ in $G$. Add the edges
 $\{t_i^v,t_j^w\}$, $\{y_{i,i+1}^v,y_{j-1,j}^w\}$ and $\{y_{i-1,i}^v, y_{j,j+1}^w\}$
(See Figure~\ref{fig:grid-reduction-1}). All indices are taken modulo $3$. There are no other edges in $f_1(G)$. Observe that $f_1(G)$ has maximum degree at most $6$ and girth $4$.



\begin{figure}[t]
	\centering	
	\scalebox{1}{\begin{tikzpicture}[scale=1]
		\node[circle, draw=black!100,fill=red,thick, inner sep=0pt, minimum size=1.5mm,label=above right:{$c^v$}] (c^v) at (0,0) {};
		
		\node[circle, draw=black!100,fill=black!100,thick, inner sep=0pt, minimum size=1.5mm,label=above right:{$t_0^v$}] (t_0^v) at (0:2.5) {};

		\node[circle, draw=black!100,fill=black!100,thick, inner sep=0pt, minimum size=1.5mm,label=left:{$x_{0,1}^v$}] (x_{0,1}^v) at (60:1) {};
		\node[circle, draw=black!100,fill=black!100,thick, inner sep=0pt, minimum size=1.5mm,label=above left:{$y_{0,1}^v$}] (y_{0,1}^v) at (60:2) {};
		\node[circle, draw=black!100,fill=red,thick, inner sep=0pt, minimum size=1.5mm,label=right:{$z_{0,1}^v$}] (z_{0,1}^v) at (60:3) {};
		\node[circle, draw=black!100,fill=black!100,thick, inner sep=0pt, minimum size=1.5mm,label=left:{$t_1^v$}] (t_1^v) at (120:2.5) {};

		\node[circle, draw=black!100,fill=black!100,thick, inner sep=0pt, minimum size=1.5mm,label=above left:{$x_{1,2}^v$}] (x_{1,2}^v) at (180:1) {};
		\node[circle, draw=black!100,fill=black!100,thick, inner sep=0pt, minimum size=1.5mm,label=above left:{$y_{1,2}^v$}] (y_{1,2}^v) at (180:2) {};
		\node[circle, draw=black!100,fill=red,thick, inner sep=0pt, minimum size=1.5mm,label=below:{$z_{1,2}^v$}] (z_{1,2}^v) at (180:3) {};
		\node[circle, draw=black!100,fill=black!100,thick, inner sep=0pt, minimum size=1.5mm,label=left:{$t_2^v$}] (t_2^v) at (240:2.5) {};

		\node[circle, draw=black!100,fill=black!100,thick, inner sep=0pt, minimum size=1.5mm,label=left:{$x_{2,0}^v$}] (x_{2,0}^v) at (300:1) {};
		\node[circle, draw=black!100,fill=black!100,thick, inner sep=0pt, minimum size=1.5mm,label=below left:{$y_{2,0}^v$}] (y_{2,0}^v) at (300:2) {};
		\node[circle, draw=black!100,fill=red,thick, inner sep=0pt, minimum size=1.5mm,label=right:{$z_{2,0}^v$}] (z_{2,0}^v) at (300:3) {};

		\draw[very thick] (t_0^v)--(c^v)--(x_{0,1}^v)--(y_{0,1}^v)--(z_{0,1}^v);
		\draw[very thick] (t_1^v)--(c^v)--(x_{1,2}^v)--(y_{1,2}^v)--(z_{1,2}^v);
		\draw[very thick] (t_2^v)--(c^v)--(x_{2,0}^v)--(y_{2,0}^v)--(z_{2,0}^v);
		\draw[very thick] (t_0^v)--(y_{0,1}^v)--(t_1^v)--(y_{1,2}^v)--(t_2^v)--(y_{2,0}^v)--(t_0^v);
		
		\filldraw[fill=black!10!white,opacity=0.3,draw=black] (-3.4,-3) rectangle (3.2,3);
        
        \node[left] at (-3.5,3) {\Large $G_v$};
        
		\node[xshift=8cm,circle, draw=black!100,fill=black!100,thick, inner sep=0pt, minimum size=1.5mm,label=above left:{$c^w$}] (c^w) at (0,0) {};
		\node[xshift=8cm,circle, draw=black!100,fill=black!100,thick, inner sep=0pt, minimum size=1.5mm,label=above left:{$t_0^w$}] (t_0^w) at (180:2.5) {};
		
		\node[xshift=8cm,circle, draw=black!100,fill=black!100,thick, inner sep=0pt, minimum size=1.5mm,label=right:{$x_{0,1}^w$}] (x_{0,1}^w) at (240:1) {};
		\node[xshift=8cm,circle, draw=black!100,fill=black!100,thick, inner sep=0pt, minimum size=1.5mm,label=below right:{$y_{0,1}^w$}] (y_{0,1}^w) at (240:2) {};
		\node[xshift=8cm,circle, draw=black!100,fill=red,thick, inner sep=0pt, minimum size=1.5mm,label=left:{$z_{0,1}^w$}] (z_{0,1}^w) at (240:3) {};
		\node[xshift=8cm,circle, draw=black!100,fill=black!100,thick, inner sep=0pt, minimum size=1.5mm,label=right:{$t_1^w$}] (t_1^w) at (300:2.5) {};
		
		\node[xshift=8cm,circle, draw=black!100,fill=black!100,thick, inner sep=0pt, minimum size=1.5mm,label=above right:{$x_{1,2}^w$}] (x_{1,2}^w) at (0:1) {};
		\node[xshift=8cm,circle, draw=black!100,fill=black!100,thick, inner sep=0pt, minimum size=1.5mm,label=above right:{$y_{1,2}^w$}] (y_{1,2}^w) at (0:2) {};
		\node[xshift=8cm,circle, draw=black!100,fill=red,thick, inner sep=0pt, minimum size=1.5mm,label=below:{$z_{1,2}^w$}] (z_{1,2}^w) at (0:3) {};
		\node[xshift=8cm,circle, draw=black!100,fill=black!100,thick, inner sep=0pt, minimum size=1.5mm,label=right:{$t_2^w$}] (t_2^w) at (60:2.5) {};
		
		\node[xshift=8cm,circle, draw=black!100,fill=black!100,thick, inner sep=0pt, minimum size=1.5mm,label=right:{$x_{2,0}^w$}] (x_{2,0}^w) at (120:1) {};
		\node[xshift=8cm,circle, draw=black!100,fill=black!100,thick, inner sep=0pt, minimum size=1.5mm,label=above right:{$y_{2,0}^w$}] (y_{2,0}^w) at (120:2) {};
		\node[xshift=8cm,circle, draw=black!100,fill=red,thick, inner sep=0pt, minimum size=1.5mm,label=left:{$z_{2,0}^w$}] (z_{2,0}^w) at (120:3) {};
		
		\draw[very thick] (t_0^w)--(c^w)--(y_{0,1}^w)--(z_{0,1}^w);
		\draw[very thick] (t_1^w)--(c^w)--(y_{1,2}^w)--(z_{1,2}^w);
		\draw[very thick] (t_2^w)--(c^w)--(y_{2,0}^w)--(z_{2,0}^w);
		\draw[very thick] (t_0^w)--(y_{0,1}^w)--(t_1^w)--(y_{1,2}^w)--(t_2^w)--(y_{2,0}^w)--(t_0^w);

             \draw[very thick,blue] (c^v) -- (t_0^v); \draw[very thick,blue, densely dashed ] (t_0^v) -- (t_0^w); \draw[very thick,blue] (t_0^w) --(c^w)-- (x_{1,2}^w) --(y_{1,2}^w)--(z_{1,2}^w);
		
		\filldraw[fill=black!10!white,opacity=0.3,draw=black] (4.6,-3) rectangle (11.3,3);
	    
        \node[right] at (11.7,3) {\Large $G_w$};

		\draw[densely dashed] (t_0^v)--(t_0^w) (y_{0,1}^v)--(y_{2,0}^w) (y_{2,0}^v)--(y_{0,1}^w);

		\end{tikzpicture}}
	\caption{Illustration of vertex gadgets in the reduction in the proof of Theorem~\ref{thm:hardness-grid}. The dashed lines indicate edges between two vertex gadgets. The red vertices illustrate the vertices of a solution. The blue path illustrates an example of a shortest path between $c^v$ and $z^w_{1,2}$. }\label{fig:grid-reduction-1}
	
\end{figure}

\begin{lemma}\label{lem:reduction-1}
	The graph $G$ has a vertex cover $D$ of size $k$ if and only if $f_1(G)$ has a geodetic set of size $3|V(G)|+k$.
\end{lemma}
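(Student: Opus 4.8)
The plan rests on the fact that $f_1(G)$ has exactly $3\abs{V(G)}$ vertices of degree one --- the three $z$-vertices of each gadget $G_v$ --- and that every geodetic set must contain all of them, since a degree-one vertex lies on a shortest path only as an endpoint. Write $Z$ for this set; as the gadgets $G_v$ partition $V(f_1(G))$, the lemma is equivalent to saying that the fewest vertices one must add to $Z$ to obtain a geodetic set of $f_1(G)$ equals the minimum size of a vertex cover of $G$. For a vertex $v$ of $G$ I write $w_0,w_1,w_2$ for its neighbours, with $w_i$ joined to $v$ by the edge $e_i^v$; the gluing of $G_v$ and $G_{w_i}$ joins $t_i^v$ to a $t$-vertex of $G_{w_i}$ and the two $y$-vertices flanking $t_i^v$ on the outer $6$-cycle to two $y$-vertices of $G_{w_i}$. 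Note that $x_{0,1}^v$ is the $x$-vertex ``opposite'' $t_2^v$, and cyclically.

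\textbf{From a vertex cover to a geodetic set.} Given a vertex cover $D$ with $\abs{D}=k$, take $S=Z\cup\set{c^v\colon v\in D}$, of size $3\abs{V(G)}+k$, and verify that $S$ is geodetic one gadget at a time. Every $y$-vertex is adjacent to a $z$-vertex of $S$. Each $t$-vertex $t_i^v$ lies on the length-four shortest path joining the two $z$-vertices of $G_v$ whose $y$-neighbours flank $t_i^v$; this is shortest because inside $G_v$ those two $y$-vertices are joined in two steps only through $t_i^v$ and the gluings provide no shorter route (each ``gadget hop'' costs at least one edge). Finally, each $c^v$ and the three $x$-vertices of $G_v$ must be covered: if $v\in D$ then $c^v\in S$ and $x_{0,1}^v\in I(c^v,z_{0,1}^v)$ along $c^v\to x_{0,1}^v\to y_{0,1}^v\to z_{0,1}^v$ ($d(c^v,z_{0,1}^v)=3$), and likewise for the other two; if $v\notin D$ then all of $w_0,w_1,w_2$ lie in $D$, so their centres are in $S$, and for each $i$ the centre $c^{w_i}$ covers, together with a suitable $z$-vertex of $G_v$, both $c^v$ and the $x$-vertex of $G_v$ opposite $t_i^v$ --- for instance $x_{0,1}^v,c^v\in I(c^{w_2},z_{0,1}^v)$ along $c^{w_2}\to t^{w_2}\to t_2^v\to c^v\to x_{0,1}^v\to y_{0,1}^v\to z_{0,1}^v$, which is shortest since $d(c^{w_2},z_{0,1}^v)=6$. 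The only work is a short list of distance checks, the recurring point being that the ``straight'' path through a $t$-vertex and a centre is a geodesic.

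\textbf{From a geodetic set to a vertex cover.} Let $S$ be geodetic with $\abs{S}\le 3\abs{V(G)}+k$; then $Z\subseteq S$ and, since the gadgets partition the vertex set, $D_0:=\set{v\colon (S\setminus Z)\cap V(G_v)\neq\emptyset}$ has $\abs{D_0}\le\abs{S\setminus Z}\le k$. I would show $D_0$ is a vertex cover using the structural fact
\[\set{p\in V(f_1(G))\colon d(p,y_{0,1}^v)=d(p,c^v)+2}\ \subseteq\ \bigl(V(G_v)\cup V(G_{w_2})\bigr)\setminus Z,\]
together with its two cyclic analogues (for $y_{1,2}^v$ with $w_0$, and for $y_{0,2}^v$ with $w_1$). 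Granting this: since $z_{0,1}^v\in Z\subseteq S$ lies beyond $y_{0,1}^v$, the vertex $x_{0,1}^v$ is covered by $S$ if and only if $S$ contains some $p$ with $d(p,y_{0,1}^v)=d(p,c^v)+2$ (reroute the geodesic $p\to\cdots\to c^v\to\cdots\to z_{0,1}^v$ through $x_{0,1}^v$); hence $x_{0,1}^v$ covered forces $(S\setminus Z)\cap(V(G_v)\cup V(G_{w_2}))\neq\emptyset$, i.e.\ $v\in D_0$ or $w_2\in D_0$. Applying the three analogues, $v\notin D_0$ forces $w_0,w_1,w_2\in D_0$, so $D_0$ is a vertex cover; padding $D_0$ up to size $k$ (if $k\le\abs{V(G)}$) finishes. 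To prove the structural fact: if $d(p,y_{0,1}^v)=d(p,c^v)+2$ then $c^v$ lies on a shortest $p$--$y_{0,1}^v$ path, and I would follow such a path backwards from $c^v$. The preceding vertex cannot be $t_0^v$ or $t_1^v$ (each adjacent to $y_{0,1}^v$, producing a shortcut), so it is $t_2^v$, $x_{1,2}^v$, $x_{0,2}^v$, or nothing ($p=c^v$); from an $x$-vertex the walk must stop, and from $t_2^v$ it may continue into $G_{w_2}$ but then, once it would leave $V(G_v)\cup V(G_{w_2})$ (through an $x$-vertex's $y$-neighbour, or out of $G_{w_2}$ via a $t$- or $x$-vertex of $G_{w_2}$), the outer $6$-cycle of $G_v$ or an external $y$--$y$ edge of a gluing yields a strictly shorter $p$--$y_{0,1}^v$ path, a contradiction. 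The quantitative heart is the inequality $d(p,t_2^v)\ge\min\bigl(d(p,y_{1,2}^v),d(p,y_{0,2}^v)\bigr)$ for $p\notin V(G_v)\cup V(G_{w_2})$, proved by inspecting the last edge of a shortest $p$--$t_2^v$ path; a direct check that no $z$-vertex of $G_v$ or $G_{w_2}$ satisfies $d(p,y_{0,1}^v)=d(p,c^v)+2$ removes them from the set.

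\textbf{The main obstacle.} The delicate step is the structural fact in the second direction. Because the $z$-vertices of \emph{every} gadget are forced into $S$, it does not suffice to observe that $x_{0,1}^v$ is ``hard to reach''; one must rule out that a forced $z$-vertex of $G_{w_2}$, or of a gadget even farther from $v$, witnesses the covering of $x_{0,1}^v$. This is exactly where the precise wiring of the gadget is used --- the alternation of $t$- and $y$-vertices around the outer $6$-cycle, and the fact that each three-edge gluing always supplies a shortcut $y$-edge heading back towards $y_{0,1}^v$ --- and it is the part of the argument most at risk of ballooning into a long case analysis. I would keep it bounded by reducing everything to the single inequality above plus the finitely-terminating backward walk along a geodesic, so that only vertices of $G_v$ and $G_{w_2}$ ever need to be examined.
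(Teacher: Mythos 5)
Your proposal is correct and follows essentially the same route as the paper: the identical forward construction ($Z$ plus the centres $c^v$ for $v$ in the cover, with the same length-$3$ and length-$6$ geodesics), and the same converse strategy of letting the at most $k$ non-forced solution vertices define the cover and showing, edge by edge, that the interior vertex $x_{i,j}^v$ opposite an edge $vw$ can only be covered from within $G_v$ or $G_w$. Your distance identity $d(p,y_{0,1}^v)=d(p,c^v)+2$ together with the boundary/entry-point analysis is just a more explicit, quantitative rendering of the paper's observation that shortest paths between forced $z$-vertices stay on the outer $6$-cycles and that no shortest path between the other two neighbouring gadgets passes through that $x$-vertex; the stated structural fact is indeed true, so your sketch closes in the intended way.
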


\begin{proof}
We construct a geodetic set $S$ of $f_1(G)$ of size $3|V(G)|+k$ as follows. For each vertex $v$ in $G$, we add the three vertices $z_{0,1}^v, z_{1,2}^v, z_{2,0}^v$ of $G_v$ to $S$. If $v$ is in $D$, then we also add the vertex $c^v$ to $S$. Let $Q=\{(0,1),(1,2),(2,0)\}$.
	
    Let us show that $S$ is indeed a geodetic set. First, we observe that in any vertex gadget $G_v$ that is part of $f_1(G)$, the unique shortest path between two distinct vertices $z_{i,j}^v$, $z_{i',j'}^v$ has length~$4$ and goes through vertices $y_{i,j}^v$, $t_{k}^v$ and $y_{i',j'}^v$ (where $\{k\}=\{i,j\}\cap\{i',j'\}$). Thus, it only remains to show that the vertices $\{c^v, x_{i,j}^v\}$ ($(i,j)\in Q$) belong to some shortest path of vertices of $S$. Assume that $v$ is a vertex of $G$ in $D$. The shortest paths between $c^v$ and $z_{i,j}^v$ have length~$3$ and one of them goes through vertex $x_{i,j}^v$. Thus, all vertices of $G_v$ belong to some shortest path between vertices of $S$. Now, consider a vertex $w\notin D$ of $G$. Since $G$ is a cubic planar graph, all three neighbours of $w$, say, $w_1,w_2,w_3$ must lie in $D$. Let $C=\{c^{w_1},c^{w_2},c^{w_3}\}$ and $Z=\{z^w_{0,1},z^w_{1,2},z^w_{2,0}\}$. Observe that all vertices of $G_w$ lie in the set $I(C \cup Z)$. Therefore, $S$ is a geodetic set.

    For the converse, assume we have a geodetic set $S'$ of $f_1(G)$ of size $3|V(G)|+k$. We will show that $G$ has a vertex cover of size~$k$. First of all, observe that all the $3|V(G)|$ vertices of type $z_{i,j}^v$ are necessarily in $S'$, since they have degree $1$. As observed earlier, the shortest paths between those vertices already go through all vertices of type $t_i^v$ and $y_{i,j}^v$. However, no other vertex lies on a shortest path between two such vertices: these shortest paths always go through the boundary $6$-cycle of the vertex-gadgets. Let $S'_0$ be the set of the remaining $k$ vertices of $S'$. These vertices are there to cover the vertices of types $c^v$ and $x_{i,j}^v$. We construct a subset $D'$ of $V(G)$ as follows: $D'$ contains those vertices $v$ of $G$ whose vertex-gadget $G_v$ contains a vertex of $S'_0$. We claim that $D'$ is a vertex cover of $G$. Suppose by contradiction that there is an edge $vw$ of $G$ such that neither $G_v$ nor $G_w$ contains any vertex of $S'_0$. Without loss of generality assume that $e_0^v$ and $e_0^w$ are the two labels of the edge $vw$.  {  We prove the following claim:

    \begin{claim}
        Let $P$ be the vertices of degree one of $f_1(G)$, $A_1=\{ z_{1,2}^v, z_{2,0}^v, z_{0,1}^v\}$ and $A_2=\{x_{1,2}^v, x_{0,1}^v, x_{2,0}^v\}$. Then $I(P) \cap A_2 = \emptyset$. Moreover, for any vertex $u'$ of $G$ that is not a neighbour of $v$ in $G$, and any vertex $a'\in G_{u'}$, $I(a' \cup A_1) \cap A_2 = \emptyset$.  
        \end{claim}
    \begin{proof}
        Clearly no vertex of $A_2$ lies in any shortest path between vertices of $A_1$. Now let $w'$ be a neighbour of $v$ in $G$ and consider the set $A_3=\{ z_{1,2}^{w'}, z_{2,0}^{w'}, z_{0,1}^{w'} \}$. For any vertex $b \in A_1$ and a vertex $b'\in A_3$, observe that all shortest paths between $b$ and $b'$ always pass through the vertices of the boundary $6$-cycle of $G_v$. Therefore such shortest paths will not contain any vertex of $A_2$ and hence $I(A_1 \cup A_3) \cap A_2 = \emptyset$. Now consider $u'$  which is not a neighbour of $v$ in $G$, and any vertex $a'\in G_{u'}$. Then, any shortest path between $a'$ and a vertex $a''\in A_1$, will contain a vertex of the form $y^v_{i,j}$ and therefore, will not contain any vertex of $A_2$. This concludes the proof.
    \end{proof}

    }
     Let $a$ and $b$ be the neighbours of $v$ different from $w$ in $G$. Then, $I(P \cup V(G_a) \cup V(G_b) \cup V (G_c) \cup V (G_d))$ where $c$ and $d$ are neighbours of $w$ in $G$ does not contain $x_{1,2}^v$. Due to the above claim, for any vertex $u'$ that is not a neighbour of $v$ in $G$, and any vertex $a'\in G_{u'}$, the shortest path between $a'$ and any vertex of $\{z_{1,2}^v, z_{2,0}^v, z_{0,1}^v\}$ does not contain any vertex of the form $x^v_{i,j}$. Hence, $x_{1,2}^v$ is not covered by any pair of vertices in $S'$, a contradiction. Therefore, $D'$ is a vertex cover of $G$.
\end{proof}



\begin{figure}[t]
	\centering \includegraphics[scale=1.2,page=1]{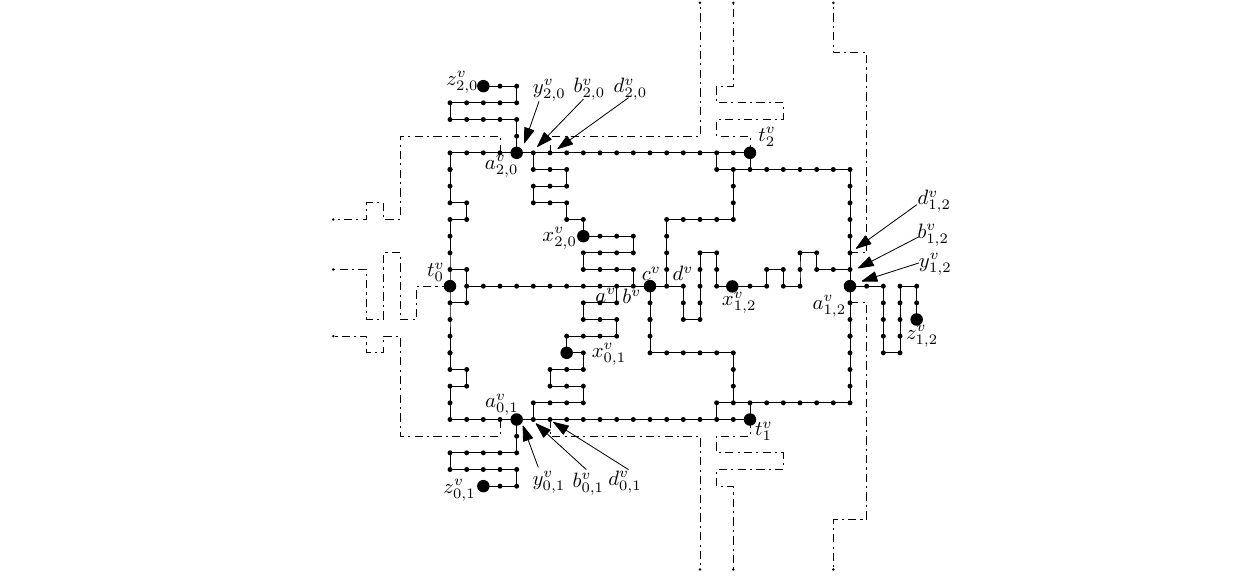} 
	\caption{Construction of $f_2(G)$.} \label{fig:grid-reduction-2}
\end{figure}

\subsection{{Reduction and proof for $f_2(G)$: The final step }}

In this section, we shall construct a partial grid $f_2(G)$ from a cubic planar graph and prove Theorem~\ref{thm:hardness-grid}. We shall use the gadget shown in Figure~\ref{fig:grid-reduction-2}. As before, for each vertex $v$ of $G$ we shall consider a new copy of the graph shown in Figure~\ref{fig:grid-reduction-2} and call it $G'_v$. Roughly speaking, $G'_v$ is designed such that it is a subcubic partial grid graph, and exhibits properties similar to that of $G_v$. First, we consider the following observations concerning our gadget $G'_v$ that will be useful for us.


	
	

\begin{observation}\label{obs:hard:1}
    Consider the graph concerning the gadget $G'_v$ and consider two vertices $p$ and $q$. Then all of the following hold (and are evident from the construction):
    \begin{enumerate}
        \item Let $p = c^v$. Then, if  $q\in \{y^v_{0,1}, y^v_{2,0}, y^v_{1,2}\}$, then $d(p ,q) = 26$. And, if $q\in \{ x^v_{0,1}, x^v_{2,0} , x^v_{1,2}\}$, then $d(p,q) = 13$. And, if $q\in \{t^v_0, t^v_1,t^v_2\}$, then $d(p,q) = 14$. Furthermore, each $x^v_{i,j} \in I(p, y^v_{i,j})$. 
        \item If $p,q\in \{y^v_{0,1}, y^v_{2,0}, y^v_{1,2}\}$, then $d(p,q) = 28$. Similarly, if $p,q\in \{t^v_{0}, t^v_{1}, t^v_{2}\}$, then $d(p,q) = 28$.
        \item Let $p=t^v_0$. Then, if $q\in \{y^v_{0,1}, y^v_{2,0}\}$, then $d(p,q) = 14$. If $q=y^v_{1,2}$, then $d(p,q) = 40$.
        
    \end{enumerate}
\end{observation}


Next, we have the following observation that is crucial for our proof. 

\begin{observation}\label{obs:hard-3}
    Consider the graph $G'_v$. Then, all the following hold:
    \begin{enumerate}
    \item Let $S_1 = \{z^v_{0,1}, z^v_{2,0},z^v_{1,2}, c^v\}$. Then, $S_1$ covers every edge of $G'_v$. Moreover, $S_1 \setminus \{c^v\}$ covers every vertex on the boundary face of $G'_v$.

    \item Let $S_2=  \{z^v_{0,1}, z^v_{2,0},z^v_{1,2}, t^v_0, t^v_1, t^v_2\}$. Then $S_2$ covers every edge of $G'_v$. 


    \item Consider the set $S_3 = \{y^v_{0,1}, y^v_{2,0}, y^v_{1,2}, t^v_0, t^v_1, t^v_2, a^v_{0,1}, d^v_{0,1}, a^v_{2,0}, d^v_{2,0}, a^v_{1,2}, d^v_{1,2}\}$. Then, $x^v_{0,1} \notin I(S_3\setminus \{t^v_2\})$. Similarly, $x^v_{2,0} \notin I(S_3\setminus \{t^v_1\})$ and $x^v_{1,2} \notin I(S_3\setminus \{t^v_0\})$.
    \end{enumerate}
\end{observation}

\begin{proof}
The proofs of items (1) and (2) follow directly from the construction and can be verified easily using Observation~\ref{obs:hard:1}. To prove (3), we show that only that $x^v_{0,1} \notin I(S_3\setminus \{t^v_2\})$ since the proof of $x^v_{2,0} \notin I(S_3\setminus \{t^v_1\})$ and $x^v_{1,2} \notin I(S_3\setminus \{t^v_0\})$ uses identical arguments. Targeting a contradiction, assume that there are two vertices $p,q \in S_3\setminus \{t^v_2\}$ such that there is a shortest path between $p$ and $q$, say $P$, containing the vertex $x^v_{0,1}$. Since all vertices in $S_3\setminus \{t^v_2\}$ are on the boundary of $G'_v$, note that to cover the vertex $x^v_{0,1}$, $P$ must contain the $a^v,b^v_{0,1}$-path containing $x^v_{0,1}$ as a subpath, which has length 23. 


 Thus, the length of $P \geq \min \{ d(p,a^v) + d(q,b^v_{0,1}), d(q,a^v)+d(p,b^v_{0,1}) \}+ 23$. We will verify that this contradicts that $P$ is an isometric path for every choice of $p,q\in S_3\setminus \{t^v_2\}$. First, it can be easily verified that $d(p,q) \leq 42$. Now, consider the following choices of $p,q$:
\begin{enumerate}
    \item One of $p,q$ (without loss of generality, say $p$) is from $\{ a^v_{2,0}, y^v_{2,0}, d^v_{2,0}, a^v_{1,2}, y^v_{1,2}, d^v_{1,2}\}$: In this case, note that $\min \{d(p,a^v),$ $ d(p,b^v_{0,1})\} \geq 26$. Therefore, $|P| \geq 26+23 = 49$, which contradicts that $P$ is an isometric path (since $d(p,q) \leq 42$).

    \item One of $p,q$ is $t^v_0$: Without loss of generality, let us assume that $p=t^v_0$ and $q \in \{a^v_{0,1} ,y^v_{0,1}, d^v_{0,1}, t^v_1\}$. Note that $d(p,q) \leq 28$. Moreover, $\min\{ d(p, a^v), d(p,b^v_{0,1})\} \geq 12$, and hence $|P| \geq 35$. This contradicts that $P$ is an isometric path (since $d(p,q) \leq 28$).

    \item $p,q\in \{a^v_{0,1} ,y^v_{0,1}, d^v_{0,1}, t^v_1\}$: Note that in this case, $d(p,q) \leq 15$, which contradicts that $P$ is an isometric path (since $|P| \geq 23$). 
\end{enumerate}
This completes our proof.
\end{proof}

Finally, to complete the reduction of $f_2(G)$, first, we apply Lemma~\ref{lem:good-embedding} on $G$ to get a partial grid $H$ which is an equal subdivision of $G$.  
Let us call the vertices of $G$ in $H$ as the \textit{original vertices}. Note that if $uw\in E(G)$, then there exists exactly one path in $H$ with endpoints $u$ and $w$ that does not contain any other original vertices. We call these paths as \textit{external paths}. Moreover, note that each external path has the same length; let that length be $\ell$.  Then, we replace each original vertex $v$ of $H$  with $G'_v$ to get $f_2(G)$, and for an edge $uw$ of $G$, we connect $G'_u$ and $G'_w$ with three paths of equal length $\ell$ in $f_2(G)$, as we did in the case of $f_1(G)$ (where we added three edges). Note that, at this step, a further equal subdivision of $H$ (by a constant) may be necessary to ``accommodate'' $G'_v$. Furthermore, note that, due to scaling, for any $uw\in E(G)$, the three ``external paths'' between the gadgets $G'_u$ and $G'_w$ can be navigated following the ``path pattern'' of the external path between $u$ and $w$ in $H$ (since any two external paths are ``sufficiently'' far apart from each other in $H$ other than at the endpoints).  
Now, for a gadget $G_v$ in $f_1(G)$, if some external edges were having $y^v_{2,0},t^v_0, y^v_{0,1}$ as endpoints, then the corresponding external paths will have  $a^v_{2,0},t^v_0, a^v_{0,1}$, respectively as endpoints in $G'_v$ in $f_2(G)$ (see Figure~\ref{fig:grid-reduction-2}).
Similarly, if the endpoints were $y^v_{2,0},t^v_2, y^v_{1,2}$ in $f_1(G)$, then they will be $d^v_{2,0},t^v_2, d^v_{1,2}$, respectively in $f_2(G)$; and if the endpoints were $y^v_{1,2},t^v_1, y^v_{0,1}$ in $f_1(G)$, then they will be $a^v_{1,2},t^v_1, d^v_{0,1}$, respectively in $f_2(G)$. 
This completes our construction.

\medskip \noindent \textbf{Completion of the proof of Theorem~\ref{thm:hardness-grid}.} Clearly, $G'_v$ is a partial grid for each $v\in G$ (Figure~\ref{fig:grid-reduction-2}). Moreover, it is not difficult to verify that $f_2(G)$ is a partial grid that has maximum degree $3$. Let $D$ be a vertex cover of $G$ with cardinality $k$. We construct a set $S$ of cardinality $3|V(G)|+k$ as follows. For each vertex $v$ in $G$, we add the three vertices with labels $z_{i,j}^v$ ($0\leq i<j\leq 2$) to $S$. If $v$ is in $D$, we also add vertex $c^v$ to $S$. From the construction of $f_2(G)$, Observation~\ref{obs:hard-3}, and using arguments similar to the ones used in the proof of the Lemma~\ref{lem:reduction-1}, $G$ has a vertex cover of size $k$ if and only if $f_2(G)$ has an (edge) geodetic set of size $3|V(G)|+k$. 

Now, if the value of $g$ in the statement of Theorem~\ref{thm:hardness-grid} is less than the girth of $f_2(G)$, then from the previous discussions, we have that \textsc{Minimum Geodetic Set} is NP-hard for graphs in $\mathcal{PG}(3,g)$. Otherwise, we replace every edge of $f_2(G)$ with a path of length $g$. Call this modified graph $f_3(G)$, and observe that $f_3(G)\in \mathcal{PG}(3,g)$. 
Due to Lemma~\ref{lem:subdivide}, $S$ is a geodetic set of $f_3(G)$ of cardinality $3|V(G)|+k$. 

{

Now we shall argue the converse. First we have the following observation whose proof would follow a similar line of argumentation to that of Observation~\ref{obs:hard-3}.

\begin{observation}\label{obs:hard-subdivide}
    Consider an equal subdivision of the graph $G'_v$. \sloppy Consider the set $W = \{y^v_{0,1}, y^v_{2,0}, y^v_{1,2}, t^v_0, t^v_1, t^v_2, a^v_{0,1}, d^v_{0,1}, a^v_{2,0}, d^v_{2,0}, a^v_{1,2}, d^v_{1,2}\}$. Then, $x^v_{0,1} \notin I(W\setminus \{t^v_2\})$. Similarly, $x^v_{2,0} \notin I(W\setminus \{t^v_1\})$ and $x^v_{1,2} \notin I(W\setminus \{t^v_0\})$.
\end{observation}

Let $S'$ be a geodetic set of size $3|V(G)|+k$ of $f_3(G)$.  Let $S'_0\subset S'$ be the set of vertices whose degree is greater than one in $f_3(G)$. Now using the above observation and arguments similar to that of Lemma~\ref{lem:reduction-1} we can show that $G$ has a vertex cover of size $k$. This completes the proof.} Our proof also implies Corollary~\ref{C:hardness}.

\section{A linear-time algorithm for solid grids}\label{sec:solid-grid}

\newcommand{\interior}[1]{I\left(#1\right)}

\newcommand{\topleft}[1]{tl(#1)}
\newcommand{\topright}[1]{tr(#1)}
\newcommand{\bottomleft}[1]{bl(#1)}
\newcommand{\bottomright}[1]{br(#1)}

We here give a linear-time algorithm for \textsc{Minimum Geodetic Set} on solid grids and prove Theorem~\ref{thm:solid-grid}. In Section~\ref{subsec:solid-prelim}, we define some preliminary notations and state some observations. In Section~\ref{subsec:alg-solid}, we state Algorithm~\ref{alg:solid-grid} to solve \textsc{Minimum Geodetic Set} on solid grids. Then, in Section~\ref{subsec:feasible}, we prove that the set returned by our algorithm is indeed a geodetic set (Lemma~\ref{lem:feasibility}). Afterwards, in Section~\ref{sec:opt}, we prove a lower bound on the geodetic number of a solid grid (Lemma~\ref{lem:cardinlaity-2}). Finally, we prove Theorem~\ref{thm:solid-grid} in Section~\ref{subsec:solid-conclude}. The reader may take the help of Figure~\ref{fig:solid-grid-navigate} to navigate through the proof of Theorem~\ref{thm:solid-grid}. 

\begin{figure}
    \centering
    \scalebox{0.65}{
\begin{tikzpicture}
    \node (thm1) at (-0.5,0) [draw,thick,minimum width=2cm,minimum height=1cm] {Theorem~\ref{thm:solid-grid}};

    \node (obs7) at (-0.5,-2) [draw,thick,minimum width=2cm,minimum height=1cm] {Observation~\ref{obs:caardinality-solid}};

    \node (lem17) at (-8,-2) [draw,thick,minimum width=2cm,minimum height=1cm] {Lemma~\ref{lem:feasibility}};

    \node (lem22) at (7,-2) [draw,thick,minimum width=2cm,minimum height=1cm] {Lemma~\ref{lem:cardinlaity-2}};


    \node (lem9) at (-13,-4) [draw,thick,minimum width=2cm,minimum height=1cm] {\begin{tabular}{c} Lemma~\ref{lem:intersect-opposite} \end{tabular} };

    \node (lem14) at (-10,-4) [draw,thick,minimum width=2cm,minimum height=1cm] {\begin{tabular}{c}
    Lemma~\ref{lem:exist-cov-1}
    \end{tabular} };

    \node (lem15) at (-7,-4) [draw,thick,minimum width=2cm,minimum height=1cm] {\begin{tabular}{c}
      Lemma~\ref{lem:cov-1} \\ Lemma~\ref{lem:cov-2}
    \end{tabular} };

    \node (obs12) at (-4,-4) [draw,thick,minimum width=2cm,minimum height=1cm] {\begin{tabular}{c}
    Observation~\ref{obs:block}
    \end{tabular} };

    \node (alg1) at (-0.5,-9) [draw,thick,minimum width=2cm,minimum height=1cm] {Algorithm~\ref{alg:solid-grid}};
        
    \node (lem19) at (4,-4) [draw,thick,minimum width=2cm,minimum height=1cm] {\begin{tabular}{c} Lemma~\ref{lem:corner-path-props}     
    \end{tabular} };

    \node (lem20) at (7,-4) [draw,thick,minimum width=2cm,minimum height=1cm] {\begin{tabular}{c} Lemma~\ref{lem:disjoint}     
    \end{tabular} };
    
    \node (lem21) at (10,-4) [draw,thick,minimum width=2cm,minimum height=1cm] {\begin{tabular}{c} Lemma~\ref{lem:cardinality}     
    \end{tabular} };

    \node (lem13) at (-10,-6) [draw,thick,minimum width=2cm,minimum height=1cm] {\begin{tabular}{c} Lemma~\ref{lem:exist-top}     
    \end{tabular} };

    \node (lem10) at (-7,-6) [draw,thick,minimum width=2cm,minimum height=1cm] {\begin{tabular}{c} Lemma~\ref{lem:intersect}     
    \end{tabular} };

    \node (prp11) at (-4,-6) [draw,thick,minimum width=2cm,minimum height=1cm] {\begin{tabular}{c} Proposition~\ref{rslt:cut-vertex-geod} \\
    \cite{ekim2014}
    \end{tabular} };

    \node (lem18) at (10,-6) [draw,thick,minimum width=2cm,minimum height=1cm] {\begin{tabular}{c} Lemma~\ref{lem:vertical-neighbour-path}     
    \end{tabular} };

    \node (obs5) at (-4.5,-9) [draw,thick,minimum width=2cm,minimum height=1cm] {\begin{tabular}{c} Observation~\ref{obs:cycle-square}     
    \end{tabular} };

    \node (obs6) at (3,-9) [draw,thick,minimum width=2cm,minimum height=1cm] {\begin{tabular}{c} Observation~\ref{obs:iso-path}     
    \end{tabular} };

    \node (obs8) at (-10,-9) [draw,thick,minimum width=2cm,minimum height=1cm] {\begin{tabular}{c} Observation~\ref{obs:boundary}
    
    \end{tabular} };


    \draw[->,thick,densely dotted] (obs5.north) ++ (-0.3,0) -- ++ (0,0.5) -- ++ (-7,0) |- (lem13.west);
    \draw[->,thick,densely dotted] (obs5.north) -- ++(0,0.5) -| (lem18.south);
    \draw[->,thick,densely dotted] (obs6.north) -- ++ (0,0.4) -- ++(0.2,0) -- ++ (0,0.6) -- ++ (-11.75,0) |- (lem15.west);
    \draw[->,thick,densely dotted] (obs6.north) (obs6.north) -- ++ (0,0.4) -- ++(0.2,0) -- ++ (0,0.6) -- ++(0,2.35) -| (lem21.south);
    
     \draw[->,thick,densely dotted] (obs8.west) -| (lem9.south);
    
     \draw[->,thick,densely dotted] (obs8.north) -- (lem13.south);
     \draw[->,thick] (obs8.east) -| (lem10.south);

    \draw[->,thick,densely dotted] (alg1.north) -- ++ (0,2) -| (obs7.south);
    \draw[->,thick,densely dotted] (alg1.north) ++ (0,1.25) -- ++(-10,0) -- ++ (0,0.75);
    
    \draw[->,thick,densely dotted] (lem18.north) -- (lem21.south);
    \draw[->,thick,densely dotted] (prp11.north) -- (obs12.south);
    \draw[->,thick,densely dotted] (lem10.north) -- (lem15.south);
    \draw[->,thick,densely dotted] (lem13.north) -- (lem14.south);

    \draw[->,thick,densely dotted] (lem9.north) |- (lem17.west);
    \draw[->,thick,densely dotted] (lem14.north) -- (lem17.south);
    \draw[->,thick,densely dotted] (lem15.north) -- (lem17.south);
    \draw[->,thick,densely dotted] (obs12.north) |- (lem17.east);
    
    \draw[->,thick,densely dotted] (lem19.north) |- (lem22.west);
    \draw[->,thick,densely dotted] (lem19.east) -- (lem20.west);

    \draw[->,thick,densely dotted] (lem20.north) -- (lem22.south);
    \draw[->,thick,densely dotted] (lem21.north) |- (lem22.east);
    
    \draw[->,thick,densely dotted] (lem22.north) |- (thm1.east);
    \draw[->,thick,densely dotted] (obs7.north) -- (thm1.south);
    \draw[->,thick,densely dotted] (lem17.north) |- (thm1.west);
    

\end{tikzpicture}}%

    \caption{Roadmap of proof of Theorem~\ref{thm:solid-grid}.}
    \label{fig:solid-grid-navigate}
\end{figure}

\subsection{Preliminaries}\label{subsec:solid-prelim}

Let $G$ be a solid grid and $\mathcal{R}$ its grid embedding.  In this section, whenever we refer to a grid embedding of a solid grid, we refer to a grid embedding whose each interior face has area exactly four. Each vertex of $G$ corresponds to a grid point in $\mathcal{R}$, and each edge of $G$ corresponds to an orthogonal unit segment. For an edge $e$, $\mathbf{e}$ shall denote the corresponding orthogonal unit segment in $\mathcal{R}$. For a subgraph $H$, let $\mathcal{R}_H = \bigcup\limits_{e\in E(H)} \mathbf{e}$. Observe that for a cycle $C$ of $G$, $\mathcal{R}_C$ is a simple, closed, rectilinear curve that divides the plane into two disjoint regions. Let $\interior{\mathcal{R}_C}$ denote the union of $\mathcal{R}_C$ and ``the interior region bounded by  $\mathcal{R}_C$''.  A square whose each side is an orthogonal segment containing exactly two grid points is referred to as a \textit{unit square}. Note that the length of a each side of a unit square is one. For a unit square $U$ whose all four corner points are grid points, let $\topleft{U}$, $\topright{U}$, $\bottomleft{U}$, $\bottomright{U}$ denote the top-left, top-right, bottom-left, and bottom-right corner point of $U$, respectively. We shall use the following observation which essentially says that there is no ``hole'' inside the grid embedding of a cycle of a solid grid graph.

\begin{observation}\label{obs:cycle-square}
      Let $G$ be a solid grid graph and $\mathcal{R}$ be a grid embedding of $G$. Let $C$ be a cycle of $G$ and $\mathcal{R}_C$ be the sub-embedding of $C$ in $\mathcal{R}$. Let $U$ be a unit square whose all four corner points are grid points and $U\subseteq \interior{\mathcal{R}_C}$. Then, the four corner points $\topleft{U}$, $\topright{U}$, $\bottomleft{U}$, $\bottomright{U}$ induce a cycle of order four in $G$.
\end{observation}
\begin{proof}
   If the four corner points $\topleft{U}$, $\topright{U}$, $\bottomleft{U}$, $\bottomright{U}$ do not induce a cycle of order four in $G$, then there would be an interior face in  $\mathcal{R}$ whose area would be more than four. This would contradict that $\mathcal{R}$ is a  grid embedding of $G$.
\end{proof}


  Let $G$ be a solid grid graph and $\mathcal{R}$ be a grid embedding of $G$. A path $P$ in $G$ is a \emph{vertical path} (resp. \emph{horizontal path}) if $x$-coordinates (resp. $y$-coordinates) of all vertices of $P$ are the same. An \textit{isometric path} is a shortest path between its end vertices. We shall also use the following observation.

\begin{observation}\label{obs:iso-path}
  Let $G$ be a solid grid graph and $\mathcal{R}$ be a grid embedding of $G$. Let $P_1$ and $P_2$ be a vertical path and a horizontal path, respectively. Any path $P_3$ with $V(P_3) \subseteq V(P_1)\cup V(P_2)$ is an isometric path. 
\end{observation}

In the next section, we state our algorithm to compute the minimum geodetic set of a solid grid.

\subsection{Algorithm}\label{subsec:alg-solid}

   Let $G$ be a solid grid graph and $\mathcal{R}$ be a grid embedding of $G$. A vertical (resp. horizontal) path $P$ in $G$ with at least two vertices is a \emph{vertical corner path} (resp. \emph{horizontal corner path}) if (i) no vertex of $P$ is a \textit{cut-vertex}\footnote{A vertex of a graph is a \emph{cut vertex} if its removal disconnects the graph.}, (ii) both end-vertices of $P$ have degree~$2$ in $G$, (iii) all other vertices of $P$ have degree~$3$ in $G$. A path of $G$ is a \emph{corner path} if it is either a vertical corner path or a horizontal corner path. See Figure~\ref{fig:corner-path-example} for an example. A vertex $v \in V(G)$ is a \emph{corner vertex} if $v$ is an end-vertex of some corner path.

    \begin{figure}
    \centering
    \includegraphics[scale=1.5]{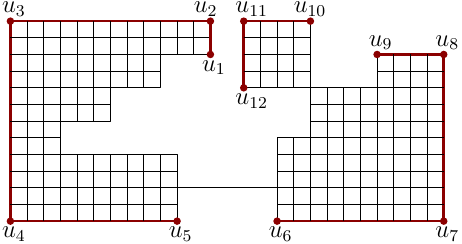}
    \caption{{ Illustration for corner paths. Here, the horizontal and vertical paths marked in bold red are corner paths. Moreover, $(u_1,u_2,u_3,u_4,u_5)$, $(u_6,u_7,u_8,u_9)$, and $(u_{10},u_{11},u_{12})$ are maximal corner sequences in this solid grid.}}
    \label{fig:corner-path-example}
\end{figure}



\begin{definition}
   Let $G$ be a solid grid graph and $\mathcal{R}$ be a grid embedding of $G$.	We say that vertices $u_1,u_2,\ldots,u_k$ form a \emph{corner sequence} if for each $1\leq i\leq k-1$,
	there is a corner path with $u_i$ and $u_{i+1}$ as end-vertices. A corner sequence is \emph{maximal} if it is not a subsequence of any other corner sequence. Two corner sequences are \emph{distinct} from each other if neither is a permutation of the other. For a corner sequence $S$, let $|S|$ denote the length of $S$.
\end{definition}

 \begin{minipage}{\linewidth}
 \begin{algorithm}[H] 
 \DontPrintSemicolon
\begin{small}
   \SetKwFunction{compute}{Compute\_sCD(G,a,A)}
    \SetKwInOut{KwIn}{Inputs}
    \SetKwInOut{KwOut}{Output}
    \KwIn{A solid grid graph $G$ and its embedding $\mathcal{R}$}
    \KwOut{A minimum geodetic set of $G$}
    
    Let $V_1$ denote the set of all vertices of degree $1$;
    
    Find all corner paths and corner vertices.
    
    Let $\mathcal{S}$ be the set of all distinct maximal corner sequences.
    
    Let $D=V_1$.
    
    \For{each {distinct} maximal sequence $S\in \mathcal{S}$}{
        Let $S=(u_1,u_2,\ldots,u_k)$.
        
        Let $f(S) = \{u_2,u_4\ldots,u_{k-k'}\}$ where $k'=0$ if $k$ is even and $k'=1$, otherwise.
        
        $D=D\cup f(S)$.
    }
 \Return $D$.
   \caption{Pseudocode to find a minimum geodetic set of a solid grid graph}
   \label{alg:solid-grid}
    
\end{small}
 \end{algorithm}
\end{minipage}

\begin{figure}
    \centering
    \includegraphics[scale=0.87]{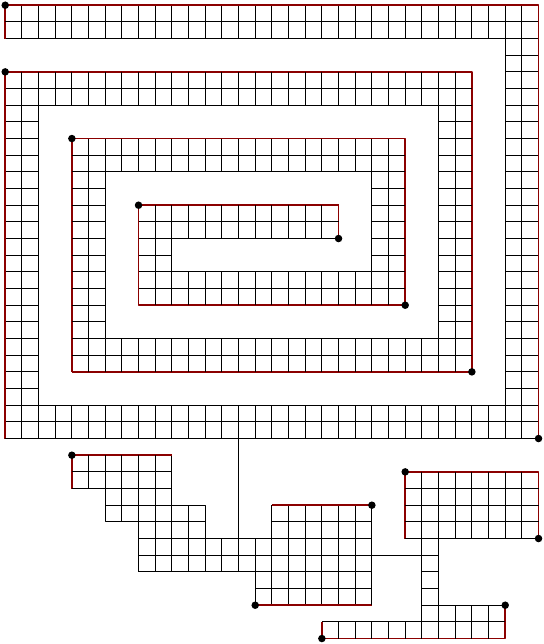}
    \caption{Illustration for execution of Algorithm~\ref{alg:solid-grid} on a solid grid. The marked vertices are selected by the algorithm. Moreover, the corner paths are represented by bold red paths.}
    \label{fig:example}
\end{figure}

 \medskip \noindent Figure~\ref{fig:example} show an example of a solid grid $G$ and the set of vertices returned by Algorithm~\ref{alg:solid-grid}. Since the embedding is given as part of the input, the running time is $O(n)$, where $n$ is part of the input. The next observation follows immediately from the above algorithm. 

\begin{observation}\label{obs:caardinality-solid}
    Let $D$ be the set returned by Algorithm~\ref{alg:solid-grid}. Let $t$ be the number of vertices of $G$ with degree one and $\mathcal{S}$ set of all {distinct} maximal corner sequences. Then, $|D| \leq t + \sum_{S\in \mathcal{S}} \left\lfloor|S|/2\right\rfloor$. 
\end{observation}

In the next section, we show that the set returned by Algorithm~\ref{alg:solid-grid} is indeed a geodetic set of the input solid grid.

\subsection{Feasibility}\label{subsec:feasible}

In this section, we shall show that Algorithm~\ref{alg:solid-grid} returns a geodetic set. In Section~\ref{subsec:boundary}, we shall prove some properties of the boundary of a biconnected component of a solid grid. Finally, in Section~\ref{subsubsec:feasible}, we shall use the developed machinery to establish the feasibility of our algorithm.

\subsubsection{Properties of the boundary of a biconnected component of a solid grid}\label{subsec:boundary}

\newcommand{\northv}[1]{north\left(#1\right)}
\newcommand{\southv}[1]{south\left(#1\right)}
\newcommand{\eastv}[1]{east\left(#1\right)}
\newcommand{\westv}[1]{west\left(#1\right)}
\newcommand{\Topchain}[1]{Tchain\left(#1\right)}
\newcommand{\Bottomchain}[1]{Bchain\left(#1\right)}
\newcommand{\Leftchain}[1]{Lchain\left(#1\right)}
\newcommand{\Rightchain}[1]{Rchain\left(#1\right)}
\newcommand{\TopLeftchain}[1]{TLchain\left(#1\right)}
\newcommand{\TopRightchain}[1]{TRchain\left(#1\right)}
\newcommand{\BottomLeftchain}[1]{BLchain\left(#1\right)}
\newcommand{\BottomRightchain}[1]{BRchain\left(#1\right)}

Let $G$ be a solid grid graph and $\mathcal{R}$ be a grid embedding of $G$. Let $H$ be a biconnected component of $G$. For a vertex $v$ of $H$, let $X_v$ denote the maximal horizontal path of $H$ containing $v$ and $Y_v$ be the maximal vertical path of $H$ containing $v$. Let $\northv{v}$ and $\southv{v}$ denote the end-vertices of $Y_v$ that have the highest and lowest $y$-coordinates, respectively. 
Similarly, let $\eastv{v}$ and $\westv{v}$ denote the end-vertices of $X_v$ that have the highest and lowest $x$-coordinates, respectively. 
Observe that the vertices $\northv{v},\westv{v},\southv{v},\eastv{v}$ lie on the boundary of $\mathcal{R}_H$. Moreover, if $v$ does not lie in the boundary of $\mathcal{R}_H$,  then $v,\northv{v},\westv{v},\southv{v},\eastv{v}$ are all distinct vertices. 

 Let $G$ be a solid grid graph and $\mathcal{R}$ be a grid embedding of $G$. Let $H$ be a biconnected component of $G$. For a vertex $v$ of $H$, let $\Topchain{v}$ denote the subgraph induced by the set of vertices encountered in a counter-clockwise traversal of the boundary of $\mathcal{R}_H$ from $\eastv{v}$ to $\westv{v}$. Let $\Bottomchain{v}$ denote the subgraph induced by the set of vertices encountered in a counter-clockwise traversal of the boundary of $\mathcal{R}_H$ from $\westv{v}$ to $\eastv{v}$. Let $\Leftchain{v}$ denote the set of vertices encountered in a counter-clockwise traversal of the boundary of $\mathcal{R}_H$ from $\northv{v}$ to $\southv{v}$. Let $\Rightchain{v}$ denote the set of vertices encountered in a counter-clockwise traversal of the boundary of $\mathcal{R}_H$ from $\southv{v}$ to $\northv{v}$. See Figure~\ref{fig:example-ChainPaths} for an illustration.

\begin{figure}
    \centering
    \includegraphics[scale=1]{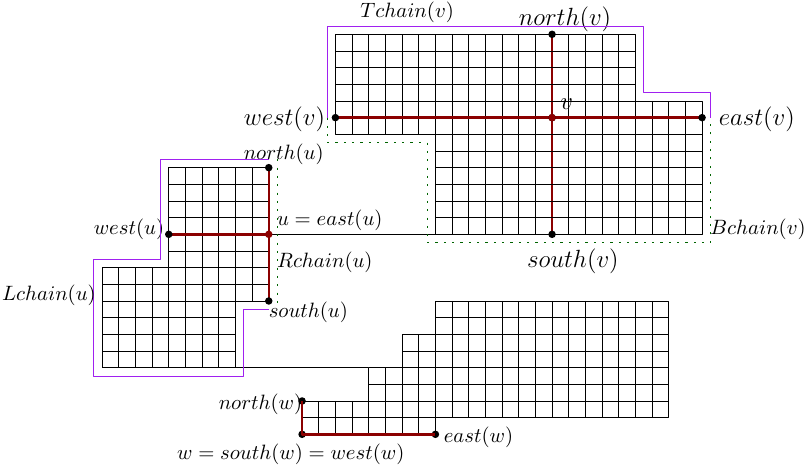}
    \caption{For vertex $v$, we outline $Tchain(v)$ with a purple line and $Bchain(v)$ with a green dotted line. Similarly, for vertex $u$, we outline $Lchain(u)$ with a purple line and $Rchain(u)$ with a dotted green line.}
    \label{fig:example-ChainPaths}
\end{figure}

 A biconnected component $H$ of $G$ is \emph{trivial} if $H$ is either just an edge or a single vertex. For all non-trivial biconnected components $H$ of $G$, we have the following observation.

 \begin{observation}\label{obs:boundary}
     Let $G$ be a solid grid graph and $\mathcal{R}$ be a grid embedding of $G$. Let $H$ be a non-trivial biconnected component of $G$. For a vertex $v$ of $H$, all the following hold.
     
     \begin{enumerate}[label=(\alph*)]

         \item\label{it:not-same-v} The vertices $\northv{v},\southv{v}$ are distinct and the vertices $\eastv{v}, \westv{v}$ are distinct.
         
         \item\label{it:same-v} Let $a\in \{ \northv{v},\southv{v} \}$ and $b\in \{ \eastv{v},\westv{v} \}$ be two vertices such that $a=b$. Then, $a=b=v$. 
         
         \item The subgraphs $\Topchain{v}$, $\Bottomchain{v}, \Leftchain{v}, \Rightchain{v}$ are paths in $G$.

         \item\label{it:path-contain} The path $\Topchain{v}$ contains $\northv{v}$ and the path $\Bottomchain{v}$ contains $\southv{v}$.
         
         \item\label{it:cycle-1} The union of the edges of $\Leftchain{v} $ and $ \Rightchain{v}$ creates a cycle $C$ in $G$ and $\mathcal{R}_C$ is the boundary of $\mathcal{R}_H$.

         \item\label{it:cycle-2} The union of the edges of $\Topchain{v}$ and $\Bottomchain{v}$ creates a cycle $C$ in $G$ and $\mathcal{R}_C$ is the boundary of $\mathcal{R}_H$.

         \item\label{it:inside-path} Let $P$ be a path such that $V(P) \subseteq V(H)$. Then $\mathcal{R}_P \subset \interior{\mathcal{R}_C}~{\cup \mathcal{R}_C}$ where $C$ is the cycle corresponding to the boundary of $\mathcal{R}_H$. 
         \end{enumerate}
         
 \end{observation}

  Let $\TopLeftchain{v}$ be the subpath of $\Topchain{v}$ between $\northv{v}$ and $\westv{v}$. Similarly, let $\TopRightchain{v}$ be the subpath of $\Topchain{v}$ between $\northv{v}$ and $\eastv{v}$. Let $\BottomLeftchain{v}$ be the subpath of $\Bottomchain{v}$ between $\southv{v}$ and $\westv{v}$. Similarly, let $\BottomRightchain{v}$ be the subpath of $\Bottomchain{v}$ between $\southv{v}$ and $\eastv{v}$. Due to Observation~\ref{obs:boundary}, the above terminologies are well defined. Observe that $\Leftchain{v}$ is the concatenation of $\TopLeftchain{v}$ with $\BottomLeftchain{v}$ and $\Rightchain{v}$ is the concatenation of $\BottomRightchain{v}$ with $\TopRightchain{v}$. We have the following lemma.

 \begin{lemma}\label{lem:intersect-opposite}
    Let $G$ be a solid grid graph and $\mathcal{R}$ be a grid embedding of $G$. Let $H$ be a non-trivial biconnected component of $G$. For a vertex $v$ of $H$, we have $V(\TopRightchain{v}) \cap V(\BottomLeftchain{v}) \subseteq \{v\}$ and $V(\TopLeftchain{v}) \cap V(\BottomRightchain{v}) \subseteq \{v\}$. 
 \end{lemma}

\begin{proof}
    Due to symmetry, it is enough to prove that $V(\TopRightchain{v}) \cap V(\BottomLeftchain{v}) \subseteq \{v\}$. If each path in $\TopLeftchain{v}, \TopRightchain{v}, \BottomLeftchain{v}, \BottomRightchain{v}$ has at least one edge, {then it follows from the definitions that $V(\TopRightchain{v}) \cap V(\BottomLeftchain{v}) = \emptyset$.} Moreover, since $\northv{v} \neq \southv{v}$ and $\eastv{v} \neq \westv{v}$  (due to Observation~\ref{obs:boundary}\ref{it:not-same-v}), we have to consider only the following four exhaustive cases.

    First, suppose $\southv{v}=\westv{v}$. Due to Observation~\ref{obs:boundary}\ref{it:same-v} we have that $\southv{v}=\westv{v} = v$, and therefore, $V(\BottomLeftchain{v})=\{v\}$. This concludes the proof of the lemma for this case.
    

 \sloppy Second, suppose $\southv{v}=\eastv{v}$. Due to Observation~\ref{obs:boundary}\ref{it:same-v} we have that $\southv{v}=\eastv{v} = v$, and therefore, $\BottomLeftchain{v}=\Bottomchain{v}$. From the definitions it is clear that $\Topchain{v} \cap \Bottomchain{v} = \{\eastv{v},\westv{v}\}$. Using Observation~\ref{obs:boundary}\ref{it:not-same-v}, we infer that $\northv{v} \neq v$ and therefore, $\TopRightchain{v}$ is proper subpath of the path induced by $\Topchain{v}$ and does not contain $\westv{v}$. This implies $V(\TopRightchain{v}) \cap V(\BottomLeftchain{v}) \subseteq \{\eastv{v}\}=\{v\}$. 

 As the cases $\northv{v}=\eastv{v}$ and $\northv{v}=\westv{v}$ are symmetrical to the above cases, we have the proof of the Lemma. \end{proof}


\begin{lemma}\label{lem:intersect}
    Let $G$ be a solid grid graph and $\mathcal{R}$ be a grid embedding of $G$. Let $H$ be a non-trivial biconnected component of $G$. Let $v$ be any vertex of $H$ and $u,w$ be two vertices such that $u\in V(\TopRightchain{v})$ and $w\in V(\BottomLeftchain{v})$. Let $P$ be any path between $u$ and $w$ in $H$. Then $V(P)\cap V(X_v)\neq \emptyset$ and $V(P)\cap V(Y_v)\neq \emptyset$.
\end{lemma}
\begin{proof}
    First, we show that $V(P) \cap V(X_v) \neq \emptyset$. If $\{u,w\}\cap V(X_v) \neq \emptyset$, then we are done. Let $C$ denote the subgraph induced by $V(X_v) \cup V(\Topchain{v})$. Observe that if no vertex of $V(X_v) \setminus \{\westv{v},\eastv{v}\}$ lie on the boundary of $\mathcal{R}_H$, then $C$ is a cycle. Otherwise, $C$ can be decomposed into cycles $C_1,C_2,\ldots,C_t$ and paths $Q_1,Q_2,\ldots,Q_{t'}$.  Since $\TopRightchain{v}\subseteq \Topchain{v}$ and $\BottomLeftchain{v} \subseteq \Bottomchain{v}$, $u\in \Topchain{v}$ and $w\in \Bottomchain{v}$. Hence $w\notin \displaystyle \bigcup\limits_{i=1}^{t} \interior{\mathcal{R}_{C_i}} \displaystyle \bigcup\limits_{i=1}^{t'} \interior{\mathcal{R}_{Q_i}}$ and therefore, $\mathcal{R}_P \not\subseteq \displaystyle \bigcup\limits_{i=1}^{t} \interior{\mathcal{R}_{C_i}} \displaystyle \bigcup\limits_{i=1}^{t'} \interior{\mathcal{R}_{Q_i}}$. 
    {Let $p$ be the point closest to $w$ (in $\mathcal{R}_P$) that also lies on  $\mathcal{R}_{C}$. If $p \in \Topchain{v}$, then it follows that $\mathcal{R}_P \not\subseteq \interior{\mathcal{R}_{C'}}~\cup \mathcal{R}_{C'}$ where $C'$ is the boundary of $H$. But this would contradict Observation~\ref{obs:boundary}\ref{it:inside-path}. Hence we have that, $p \notin \Topchain{v}$ and it must be the case that $p\in \mathcal{R}_{X_v}$ and therefore $V(P) \cap V(X_v) \neq \emptyset$.} Through symmetric arguments, it can be shown that $V(P) \cap V(Y_v) \neq \emptyset$.
\end{proof}

\subsubsection{Algorithm~\ref{alg:solid-grid} returns a geodetic set}\label{subsubsec:feasible}

The reader may observe that Algorithm~\ref{alg:solid-grid} never picks a cut vertex in its solution. To prove the rationale behind it, we shall use the following result of Ekim and Erey~\cite{ekim2014}.

\begin{proposition}[\cite{ekim2014}]\label{rslt:cut-vertex-geod}
	Let $F$ be a graph and $F_1,\ldots,F_k$ its biconnected components. Let $C$ be the set of cut vertices of $G$. If $X_i\subseteq V(F_i)$ is a minimum set such that $X_i\cup (V(F_i)\cap C)$ is a minimum geodetic set of $F_i$, then $\cup_{i=1}^{k} X_i$ is a minimum geodetic set of $F$.
\end{proposition}




The next observation follows from Proposition~\ref{rslt:cut-vertex-geod}.

\begin{observation}\label{obs:block}
Let $C(G)$ be the set of cut-vertices of $G$ and let $\{H_1,H_2,\ldots,H_k\}$ be the set of biconnected components of $G$. A set $D$ is a geodetic set of $G$ if and only if $(D\cup C(G))\cap V(H_i)$ is a geodetic set of $H_i$ for all $1\leq i\leq k$.
\end{observation}

\begin{lemma}\label{lem:exist-top}
    Let $G$ be a solid grid graph and $\mathcal{R}$ be a grid embedding of $G$. Let $C(G)$ be the set of cut-vertices of $G$ and $D$ be the set returned by Algorithm~\ref{alg:solid-grid}. Let $H$ be a non-trivial biconnected component of $G$ and $v$ be any vertex of $H$. Then, for each $Z\in \left\{ \Leftchain{v}, \Rightchain{v}, \Topchain{v}, \Bottomchain{v} \right\}$ we have that $Z \cap (D\cup C(G)) \neq \emptyset$.
\end{lemma}

\begin{proof}
Let $D_H$ denote the set $(D\cup C(G))\cap V(H)$.  Due to symmetry, it is sufficient to prove the lemma for $Z=\Topchain{v}$. Let $m$ be a vertex of $Z$ whose $y$-coordinate is maximum among all vertices of $Z$, and $X_{m}$ be the maximal horizontal path of $H$ containing $m$. Observe that $X_m$ is a subpath of $\Topchain{v}$. We shall show that either $X_m$ contains a cut vertex or it is a corner path.

Let $X_m$ do not contain any cut vertex and let $a$ and $b$ be the end-vertices of $X_m$. Suppose there exists a vertex $u\in \{a,b\}$ that has degree $3$ in $G$. Then, there exists a vertex $u' \in N[u]$, which is not a vertex of $H$. Hence $u$ is a cut-vertex, a contradiction. Therefore we have the following.

\medskip\noindent\textbf{(+)} Degrees of both $a$ and $b$ in $G$ are two.

Similarly, if any vertex $w$ of $X_m$ has a neighbour $w'$ whose $y$-coordinate is greater than that of $w$, then $w$ is also a cut-vertex which is a contradiction. Now, consider any vertex $c$ of $X_m$ distinct from both $a$ and $b$. Let the coordinate of $c$ be $(\ell,\ell')$. Consider the unit square $U$ that has $(\ell,\ell')$ and $(\ell-1,\ell'-1)$ as diagonally opposite points. Also consider the subgraph $F$ formed by the vertices of $V(\Topchain{v}) \cup V(\Bottomchain{v})$. Due to Observation~\ref{obs:boundary}\ref{it:cycle-1}, $F$ is a cycle of $G$ and observe that $\interior{\mathcal{R}_F}$ contains $U$. Due to Observation~\ref{obs:cycle-square}, the four corner points of $U$ induce a cycle in $G$. Therefore, we have the following.

\medskip\noindent\textbf{(++)} For any vertex of $c\in V(X_m)\setminus \{a,b\}$, degree of $c$ in $G$ is three.

Due to $(+)$ and $(++)$, we have that $X_m$ is a corner path. Now recall the definition of maximal corner sequence and observe that there exists a maximal corner sequence $S$ where $a$ and $b$ are consecutive. From the definition of $D$, we have that at least one vertex among $a$ and $b$ lies in $D$ and, therefore, in $D_H$. This completes the proof.
\end{proof}

\begin{lemma}\label{lem:exist-cov-1}
    Let $G$ be a solid grid graph and $\mathcal{R}$ be a grid embedding of $G$. Let $C(G)$ be the set of cut-vertices of $G$ and $D$ be the set returned by Algorithm~\ref{alg:solid-grid}. Let $H$ be a non-trivial biconnected component of $G$, and $v$ be any vertex of $H$. Then, at least one of the following is true. 
    
    \begin{enumerate}[label=(\alph*)]
        \item\label{it:exist-cov-1-a} There exist two vertices $\{u,w\} \subseteq (D\cup C(G))\cap V(H)$ such that $u\in V(\TopRightchain{v})$ and $w\in V(\BottomLeftchain{v})$.

        \item\label{it:exist-cov-1-b} There exist two vertices $\{u,w\} \subseteq (D\cup C(G))\cap V(H)$ such that $u\in V(\TopLeftchain{v})$ and $w\in V(\BottomRightchain{v})$.
        
    \end{enumerate} 
\end{lemma}

\begin{proof}
Let $D_H=(D\cup C(G))\cap V(H)$. Due to Lemma~\ref{lem:exist-top}, we know that there exists at least one vertex $u\in D_H$ such that $u\in V(\Bottomchain{v})$. Due to symmetry, we assume that $u\in \BottomLeftchain{v}$. Now suppose \ref{it:exist-cov-1-a} is false. Then $D_H\cap \TopRightchain{v}=\emptyset$. Applying Lemma~\ref{lem:exist-top}, we infer that there exists a vertex $u' \in D_H\cap \Topchain{v}$ and therefore $u'\in \TopLeftchain{v}$. We apply Lemma~\ref{lem:exist-top} again to infer that there exists a vertex $w' \in D_H\cap \Rightchain{v}$. Since $\Rightchain{v}$ is the concatenation of $\BottomRightchain{v}$ with $\TopRightchain{v}$, we infer that $v'$ must lie in $\BottomRightchain{v}$. Hence \ref{it:exist-cov-1-b} is true.
\end{proof}

\begin{lemma}\label{lem:cov-1}
    Let $G$ be a solid grid graph and $\mathcal{R}$ be a grid embedding of $G$. Let $H$ be a non-trivial biconnected component of $G$. Let $v$ be any vertex of $H$ and $u,w$ be two vertices such that $u\in V(\TopRightchain{v})$ and $w\in V(\BottomLeftchain{v})$. Then there exists a shortest path between $u$ and $w$ that contains $v$.
\end{lemma}

\begin{proof}
    Let $P$ be a shortest path between $u$ and $w$. If $v\in V(P)$, we are done. Recall that $X_v$ (resp. $Y_v$) is the maximal horizontal (resp. vertical) path of $H$ containing $v$. Lemma~\ref{lem:intersect} implies that $V(P)\cap X_v \neq \emptyset$ and $V(P)\cap Y_v \neq \emptyset$. Let $P$ be be written as $u=z_0~z_1~z_2~\ldots z_t=w $. Let $i$ be the minimum index such that $z_i \in V(X_v)\cup V(Y_v)$ and $Q\in \{X_v,Y_v\}$ such that $z_i\in V(Q)$. Since $z_i\neq v$, there exists a maximum index $j$, distinct from $i$, and a path $Q'\in \{X_v,Y_v\}$ distinct from $Q$ such that $z_j\in V(Q')$. Now, consider the path $P'$ between $z_i$ and $z_j$ such that $V(P') \subseteq V(X_v)\cup V(Y_v)$. Observation~\ref{obs:iso-path} implies that $P'$ is an isometric path. Hence, the path $Q=z_0 \ldots z_{i-1}~P'~z_{j+1}\ldots z_t$ is a also an isometric path that contains $v$.
\end{proof}

Due to symmetry, we also have the following lemma.

\begin{lemma}\label{lem:cov-2}
    Let $G$ be a solid grid graph and $\mathcal{R}$ be a grid embedding of $G$. Let $H$ be a non-trivial biconnected component of $G$. Let $v$ be any vertex of $H$ and $u,w$ be two vertices such that $u\in V(\TopLeftchain{v})$ and $w\in V(\BottomRightchain{v})$. Then there exists a shortest path $P$ between $u$ and $w$ such that $v\in V(P)$.
\end{lemma}

\begin{lemma}\label{lem:feasibility}
    Algorithm~\ref{alg:solid-grid} returns a geodetic set of the input solid grid graph.
\end{lemma}

\begin{proof}
    Let $G$ be a solid grid graph and $\mathcal{R}$ be a grid embedding of $G$. Let $D$ be the set returned by Algorithm~\ref{alg:solid-grid} with $G$ as input. Let $H$ be any biconnected component of $G$. If $H$ is an edge, then either both vertices of $H$ are cut vertices, or at least one of the vertices is degree one. In either case, both vertices lie in some shortest path between some pair of vertices in $D$. So, we assume $H$ is a non-trivial biconnected component and $v$ to be any vertex of $H$. Then, due to Lemma~\ref{lem:exist-cov-1}, there exists two vertices $\{u,w\} \subseteq (D\cup C(G))\cap V(H)$ such that either $(i)\ u\in V(\TopRightchain{v})$ and $w\in V(\BottomLeftchain{v})$, or $(ii)$ $u\in V(\TopLeftchain{v})$ and $w\in V(\BottomRightchain{v})$. In either case, if $u=w$ then due to Lemma~\ref{lem:intersect-opposite}, we have that $u=w=v$. Otherwise, due to Lemma~\ref{lem:cov-1} and Lemma~\ref{lem:cov-2}, we have that $v$ lies in a shortest path between $u$ and $w$. Hence, $(D\cup C(G))\cap V(H)$ is a geodetic set of $H$. Now arguing for all biconnected components of $G$ as above, and due to Observation~\ref{obs:block}, we have that $D$ is a geodetic set of $G$.
\end{proof}
{
\begin{lemma}\label{lem:edgeFeasibe}
    Algorithm~\ref{alg:solid-grid} returns an edge geodetic set of the input solid grid graph.
\end{lemma}
\begin{proof}
Targeting contradiction, assume that the set $S$ returned by Algorithm~\ref{alg:solid-grid} is not an edge geodetic set of the input solid grid graph. Then, there is an edge $uv$ that is not covered by $S$. Let $(x_u,y_u)$ and $(x_v,y_v)$ be the coordinates of $u$ and $v$ in the grid embedding $\mathcal{R}$. Without loss of generality, assume $y_u=y_v$ and that $x_v=x_u+1$. 

Recall that $S$ is a geodetic set (due to Lemma~\ref{lem:feasibility}). Hence, if $uv$ is a cut edge, then $S$ covers $uv$, and we are done. Now assume $uv$ belongs to some non-trivial biconnected component $H$. Due to Lemma~\ref{lem:exist-cov-1} and without loss of generality assume, there exist two vertices $\{w,w'\} \subseteq S\cap V(H)$ such that  $w\in V(\TopLeftchain{u})$ and  $w'\in V(\BottomRightchain{u})$. Observe that if $w'\in V(\BottomRightchain{v})$ as well, then the edge $uv$ lies in some shortest path between $w$ and $w'$. Otherwise, $w'$ must be the bottom-most endpoint of the maximal vertical path passing through $u$, i.e., $Y_u$.

Again applying Lemma~\ref{lem:exist-cov-1}, we know that either there exists  a vertex $\{w''\} \subseteq S\cap V(H)$ such that  $w''\in V(\TopRightchain{v})$ or there exists some vertices $\{z,z'\}\subseteq S\cap V(H)$ such that  $z\in V(\TopLeftchain{u})$ and  $z'\in V(\BottomRightchain{u})$. Now observe that if $w''$ exists, then the edge $uv$ lies in some shortest path between $w'$ and $w''$. Otherwise, the edge $uv$ lies in some shortest path between $w$ and $z'$.
\end{proof}
}

\subsection{Optimality}\label{sec:opt}

In this section, we shall prove that Algorithm~\ref{alg:solid-grid} indeed returns a minimum geodetic set. In the next section, we shall prove some properties of the corner paths. In Section~\ref{subsec:opt} we shall establish a lower bound on the geodetic number of a solid grid. 

\subsubsection{Properties of corner paths}

In the following lemma, we prove a structural property of the subgraph induced by the neighbours of the vertices of a corner path $P$ that lies outside $P$. 

\begin{lemma}\label{lem:vertical-neighbour-path}
  Let $G$ be a solid grid graph and $\mathcal{R}$ be a grid embedding of $G$. Let $P$ be a vertical (resp. horizontal) corner path  of $G$ such that the $x$-coordinates (resp. $y$-coordinates) of all vertices of $P$ is $\ell$. Consider the set $Q = \{ v\in V(G)\colon v\notin V (P ), N (v) \cap P \neq \emptyset\}$. Then the following holds:

\begin{enumerate}[label=(\alph*)]
    \item\label{it:a} for a vertex $u\in P$, there exists a unique neighbour $u'\in Q$ and $y$-coordinates (resp. $x$-coordinates) of both $u,u'$ are same;
    
    \item\label{it:b} $x$-coordinates (resp. $y$-coordinates) of all vertices of $Q$ are the same and are equal to either $\ell-1$ or $\ell+1$;
     
    \item\label{it:c} $Q$ induces a path in $G$.
\end{enumerate} 
\end{lemma}

\begin{proof}
We shall prove the observation only for the case when $P$ is a vertical corner path, as the other case is symmetric. 

First, we prove~\ref{it:a}. Let $u$ be a vertex of $P$. Note that every internal vertex of $P$ must have degree three, and both end vertices of $P$ have degree two in $G$. Therefore, $u$ has exactly one neighbour in $Q$. As $P$ is a vertical path, the neighbour of $u$ that is not in the path must have the same $y$-coordinate as $u$. Moreover, since any two distinct vertices of $P$ have distinct $y$ coordinates, they have distinct neighbours in $Q$, as well.

\begin{figure}
    \centering
    \includegraphics[scale=0.5]{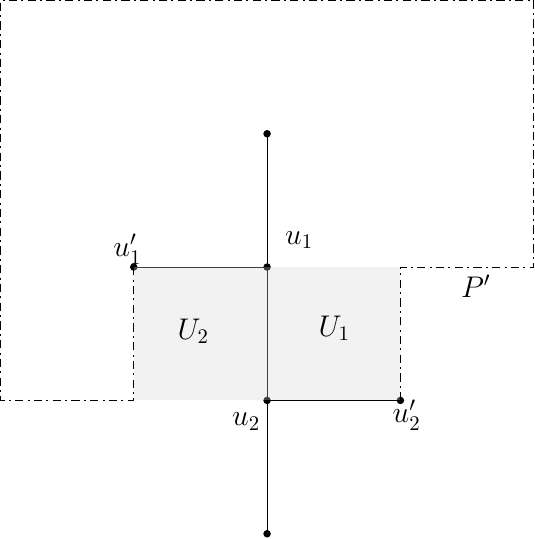}
    \caption{Illustration of the notations used in proof of Lemma~\ref{lem:vertical-neighbour-path}\ref{it:b}.}
    \label{fig:illus-1}
\end{figure}
Now we prove \ref{it:b}. Refer to Figure~\ref{fig:illus-1} for an illustration of the notations used in this proof. For a vertex $u$ of $P$, let $u'$ denote its unique neighbour in $Q$. Suppose the statement is false, then we deduce that there must exist two vertices $u_1$, $u_2$ of $P$ such that they are consecutive in $P$, the $x$-coordinate of $u_1'$ is $\ell-1$ and the $x$-coordinate of $u_2'$ is $\ell+1$. Without loss of generality, we assume that the $y$-coordinate of $u_1$ is larger than that of $u_2$ and equals to, say, $\ell_y$. Hence,  $u_1=(\ell,\ell_y)$, $u_2=(\ell,\ell_y-1)$, $u'_1=(\ell-1,\ell_y)$, and $u'_2=(\ell+1,\ell_y-1)$. Let $U_1$ denote the unit square formed by the grid points $\topleft{U_1}=u_1$, $\bottomleft{U_1}=u_2$, $\bottomright{U_1}=u'_2$ and $\topright{U_1}=(\ell+1,\ell_y)$. Let $U_2$ denote the unit square formed by the grid points $\topleft{U_2}=u'_1$, $\bottomleft{U_2}=(\ell-1,\ell_y-1)$, $\bottomright{U_2}=u_2$ and $\topright{U_2}=u_1$. Since $u_2$ is not a cut-vertex, there must exist a path $P'$ between $u_1'$ and $u_2'$ that does not contain $u_2$. Then the subgraph $H$ formed by $E(P')\cup \{u_1u'_1,u_1u_2,u_2u'_2\}$ contains a cycle $C$ such that there exists a $Z\in \{U_1,U_2\}$ with $Z\subseteq \interior{\mathcal{R}_C}$. Due to Observation~\ref{obs:cycle-square}, the four corner points of $Z$ induce a cycle of order four in $G$. If $Z=U_1$, then $u_1$ must have two distinct neighbours in $Q$, which is a contradiction. Similarly, if $Z=U_2$, then $u_2$ must have two distinct neighbours in $Q$, which is also a contradiction. 

Finally we prove \ref{it:c}. Let $P=u_1~u_2\ldots~u_k$ and $Q=\{u'_1,  u'_2, \ldots, u'_k\}$ such that $u'_i$ is the unique neighbour of $u_i$ in $N(u_i)\setminus V(P)$. Due to \ref{it:b}, the $x$-coordinates of $Q$ are all same. Without loss of generality, we assume that the $x$-coordinates of $Q$ are $\ell+1$. Let $i\in [k-1]$ and consider the vertices $u'_i$ and $u'_{i+1}$. Let $X$ denote the unit square formed by the grid points $u_{i+1}, u_{i}, u'_{i}, u'_{i+1}$. Since $u_{i+1}$ is not a cut vertex there exists a path $P'$ from $u'_{i+1}$ to $u_i$ that does not contain $u_{i+1}$. Now consider the subgraph $H$ formed by the edges $E(P')\cup \{u'_{i+1}u_{i+1},u_{i+1}u_{i},u_{i}u'_{i}\}$. The subgraph $H$ contains a cycle $C$ such that $X\subseteq \interior{\mathcal{R}_C}$. Then due to Observation~\ref{obs:cycle-square} we have that $u_{i+1}, u_{i}, u'_{i}, u'_{i+1}$ induces a cycle of length four in $G$ and therefore $u'_{i}u'_{i+1}\in E(G)$. Now for any $i,j\in [k]$ with $|i-j|>1$, the vertices $u'_i$ and $u'_j$ are non-adjacent. Hence $Q$ induces a path in $G$.
\end{proof}



\begin{lemma}\label{lem:corner-path-props}
      Let $G$ be a solid grid graph and $\mathcal{R}$ be a grid embedding of $G$. Two distinct corner paths $P$ and $Q$ have the following properties.
    \begin{enumerate}[label=(\alph*)]
        \item\label{it:1} $E(P)\cap E(Q)=\emptyset$.
        \item\label{it:2} $|V(P)\cap V(Q)|\leq 1$.
        \item\label{it:3} If there exists a vertex $z\in V(P)\cap V(Q)$ then $z$ must be an end-vertex of both $P$ and $Q$.
    \end{enumerate}
\end{lemma}

\begin{proof}

     
     Without loss of generality, assume that $P$ is a vertical corner path. 
     
     Now we shall prove \ref{it:1}. For the sake of contradiction, suppose $P$ and $Q$ share an edge $e$. Since $P$ is a vertical corner path, $Q$ must be a vertical corner path containing the edge $e$. This implies $Q$ must have the same end-vertices as $P$, and all intermediate vertices of $Q$ must lie in $P$ as well. Therefore $P=Q$, which is a contradiction. 

    Now we shall prove \ref{it:2}. Suppose $P$ and $Q$ share two vertices $z_1,z_2$, and without loss of generality, $y$-coordinate of $z_1$ is greater than that of $z_2$. Observe that $Q$ must be a vertical path containing the subpath of $P$ between $z_1,z_2$. This would imply that $P$ and $Q$ share an edge contradicting \ref{it:1}. 
    
    Now we shall prove \ref{it:3}. Suppose $P$ and $Q$ share exactly one vertex $z$. Then, observe that $Q$ must be a horizontal corner path. Now we have the following cases. 

    \begin{enumerate}
        \item  Suppose $z$ is neither an end-vertex of $P$ nor of $Q$. Then $z$ must have two neighbours, say $a,a'$, whose $y$-coordinate is the same as that of $z$. Similarly, $z$ must have two neighbours, distinct from both $a,a'$ whose $x$-coordinate is the same as that of $z$. Therefore, $z$ must have degree four in $G$, a contradiction to the fact that $z$ is a vertex of some corner path.

        \item Suppose $z$ is an end-vertex of one of the corner paths but not of the other. But this would directly contradict the definition of a corner vertex (recall that a corner vertex must have degree two in $G$ and other vertices of a corner path must have degree three in $G$).
        
  
  \end{enumerate}
    
    Hence, $z$ is an end-vertex of both $P$ and $Q$.
\end{proof}

\begin{definition}
  Let $G$ be a solid grid graph and $\mathcal{R}$ be a grid embedding of $G$. For an arbitrary maximal corner sequence $S=(u_1,u_2,\ldots,u_{|S|})$ and $1\leq i\leq |S|-1$, let $T_{i}$ denote the corner path between $u_i$ and $u_{i+1}$ and let $\Phi(S)=\displaystyle\bigcup\limits_{i=1}^{|S|-1} V(T_i)$.
\end{definition}

\begin{lemma}\label{lem:disjoint}
      Let $G$ be a solid grid graph and $\mathcal{R}$ be a grid embedding of $G$. Let $S$ and $S'$ be two {distinct} maximal corner sequences. Then $\Phi(S) \cap \Phi(S') = \emptyset$. 
\end{lemma}
\begin{proof}
    Let $S=(u_1,u_2,\ldots,u_{|S|})$ and $1\leq i\leq |S|-1$, let $T_{i}$ denote the corner path between $u_i$ and $u_{i+1}$. Similarly,  let $S'=(u'_1,u'_2,\ldots,u'_{|S'|})$ and $1\leq i\leq |S'|-1$, let $T'_{i}$ denote the corner path between $u'_i$ and $u'_{i+1}$. Now suppose $\Phi(S) \cap \Phi(S') \neq \emptyset$. Then there exist $i$ and $j$ such that the two corner paths $T_i$ and $T'_j$ has a vertex $z$ in common. Due to Lemma~\ref{lem:corner-path-props}\ref{it:3}, we know that $z$ is an end-vertex of both $T_i$ and $T'_j$. Since $z$ has degree $2$ in $G$, we conclude that $z\in \{u_1,u_{|S|}\}$ and $z\in \{u'_1,u'_{|S|}\}$. But this contradicts the maximality of $S$ and $S'$.
\end{proof}

\subsubsection{Lower bound}\label{subsec:opt}

 Chakraborty et al.~\cite{caldam2020} proved the following observations and lemma using a slightly different terminologies. For completion we provide the proofs.


\begin{lemma}[\cite{caldam2020}]\label{lem:cardinality}
  Let $G$ be a solid grid graph and $\mathcal{R}$ be a grid embedding of $G$. Any geodetic set of $G$ contains at least one vertex from each corner path.
\end{lemma}

\begin{proof}
Let there be a corner path $P$ and a geodetic set $X$ of $G$ such that $V(P)\cap X=\emptyset$. Without loss of generality, we assume $P$ to be a vertical corner path. Let $u$ be the end-vertex of $P$ with the larger $y$-coordinate. Now, consider two vertices $a, b \in X$ such that there is an isometric path $P'$ between $a$ and $b$ that contains $u$. Observe that $P'$ can be expressed as $P' = a~c_1 ~c_2\ldots c_t ~d ~f_1 ~f_2\ldots f_{t'} ~u ~g ~h_1~ h_2 \ldots h_{t''} ~b$ such that $\{f_1, f_2, \ldots , f_{t'}\} \subseteq V (P)$, $d\in N(f_1)\setminus V(P)$, $g\in N(u)\setminus V(P)$.

Now consider the set $Q=\{v\in V(G)\colon v \notin V (P ), N (v) \cap P \neq \emptyset\}$. Observe that $\{d, g\} \subseteq Q$ and therefore, due to Lemma~\ref{lem:vertical-neighbour-path}\ref{it:b}, the $x$-coordinate of both $d$ and $g$ is the same. Let the $y$-coordinate of $d$ and $g$ be $\ell$ and $\ell'$, respectively. Due to Lemma~\ref{lem:vertical-neighbour-path}\ref{it:c} and Observation~\ref{obs:iso-path}, we have that there is a isometric path between $d$ and $g$ of length $|\ell - \ell'|$. But the subpath of $P'$ between $d$ and $g$ is of length $|\ell- \ell'|+2$, which is not isometric, and, therefore, cannot be a subpath of an isometric path. Hence, this contradicts the fact that $P'$ is an isometric path.
\end{proof}

We have the following {lemma}.

\begin{lemma}\label{lem:cardinlaity-2}
  Let $G$ be a solid grid graph and $\mathcal{R}$ be a grid embedding of $G$. Let $\mathcal{S}$ be the set of all {distinct} maximal corner sequences of $G$, and let $t$ be the number of vertices of $G$ with degree $1$. Then, $gn(G)\geq t+\sum_{S\in \mathcal{S}} \left\lfloor \frac{|S|}{2}\right\rfloor$.
\end{lemma}

\begin{proof}
Let $X$ be a minimum geodetic set of $G$ and $V_1$ be the set of all vertices with degree one. Observe that $V_1\subseteq X$ and therefore $gn(G)\geq t$. Recall that for an arbitrary maximal corner sequence $S=u_1,u_2,\ldots,u_{|S|}\in \mathcal{S}$ and for $1\leq i\leq |S|-1$, $T_{i}$ denotes the corner path between $u_i$ and $u_{i+1}$ and $\Phi(S)=\displaystyle\bigcup\limits_{i=1}^{|S|-1} V(T_i)$. Lemma~\ref{lem:cardinality} implies that for each $1\leq i<|S|$, at least one vertex of the corner path $T_i$ must belong to $X$. Lemma~\ref{lem:corner-path-props}\ref{it:3} implies that no vertex of $G$ lies in more than two corner paths. Therefore, $X$ must contain at least $\left\lfloor \frac{|S|}{2} \right\rfloor$ vertices from $\Phi(S)$. 

Now, let $S_1, S_2, \ldots S_t$ be the maximal corner sequences. Observe that for any $\{i,j\} \subseteq \{1,2,\ldots, |\mathcal{S}|\}$, due to Lemma~\ref{lem:disjoint}, $\Phi(S_i)\cap \Phi(S_j)=\emptyset$. Now, previous arguments imply that $X$ must contain a set $V_2$ whose cardinality is at least $\sum_{S\in \mathcal{S}} \left\lfloor \frac{|S|}{2}\right\rfloor$. Moreover, $V_2\subseteq \displaystyle\bigcup\limits_{i=1}^t \Phi(S_i)$. Since no vertex of $V_1$ is a vertex of any corner path, $V_1\cap V_2=\emptyset$. Therefore, $gn(G)=|X|\geq t+\sum_{S\in \mathcal{S}} \left\lfloor \frac{|S|}{2}\right\rfloor$.
\end{proof}

\subsection{Proof of Theorem~\ref{thm:solid-grid}} \label{subsec:solid-conclude}

Now, proof of Theorem~\ref{thm:solid-grid} follows from Observation~\ref{obs:caardinality-solid}, Lemma~\ref{lem:feasibility} and Lemma~\ref{lem:cardinlaity-2}.
{Moreover, we have the following structural lemma, as a consequence of Observation~\ref{obs:caardinality-solid}, Lemma~\ref{lem:feasibility}, and Lemma~\ref{lem:cardinlaity-2}, that provides a tight bound on the geodetic number of a solid grid in terms of the cardinality of its corner sequences. 
\begin{lemma}\label{lem:tightBound}
   Let $G$ be a solid grid graph and $\mathcal{R}$ be a grid embedding of $G$. Let $\mathcal{S}$ be the set of all maximal corner sequences of $G$, and let $t$ be the number of vertices of $G$ with degree $1$. Then, $gn(G) = t+\sum_{S\in \mathcal{S}} \left\lfloor \frac{|S|}{2}\right\rfloor$.
\end{lemma}

The above (structural) lemma could be of independent interest. 

\medskip\noindent\textbf{Proof of Corollary~\ref{C:algo}} The proof follows from Observation~\ref{obs:caardinality-solid}, Lemma~\ref{lem:edgeFeasibe}, and Lemma~\ref{lem:cardinlaity-2}.

}

\section{Conclusion}

We proved that \textsc{Minimum Geodetic Set} is NP-hard on partial grids and that \textsc{Minimum Geodetic Set} admits linear time algorithm on solid grids. An interesting question is whether \textsc{Minimum Geodetic Set} is FPT on partial grids (or planar graphs) parameterized by the solution size? Another interesting question is the existence of constant factor approximation algorithms for \textsc{Minimum Geodetic Set} on partial grids or planar graphs. 

Finally, we think that studying the approximability and parameterized (in)tractability of related problems like \textsc{Isometric Path Cover}~\cite{chakraborty2022}, \textsc{Strong Geodetic Set}~\cite{lima2022computational}, and \textsc{Geodetic Hull}~\cite{hernando2005steiner} on planar graphs is another interesting direction of research.

\medskip \noindent \textbf{Acknowledgement:} We thank the reviewers for reading the paper carefully and for all their suggestions that improved the paper substantially. We thank one of the reviewers for providing the counter example shown in Figure~\ref{fig:converse-not-true}. 

\bibliographystyle{alpha}
\bibliography{references}

@InProceedings{ISAAC-version,
  author =	{D. Chakraborty and S. Das and F. Foucaud and H. Gahlawat and D. Lajou and B. Roy},
  title =	{{Algorithms and complexity for geodetic sets on planar and chordal graphs}},
  booktitle =	{31st International Symposium on Algorithms and Computation (ISAAC 2020)},
  pages =	{7:1--7:15},
  series =	{Leibniz International Proceedings in Informatics (LIPIcs)},
  ISBN =	{978-3-95977-173-3},
  ISSN =	{1868-8969},
  year =	{2020},
  volume =	{181},
  COMMENTeditor =	{Yixin Cao and Siu-Wing Cheng and Minming Li},
  publisher =	{Schloss Dagstuhl--Leibniz-Zentrum f{\"u}r Informatik},
  address =	{Dagstuhl, Germany},
  URL =		{https://drops.dagstuhl.de/opus/volltexte/2020/13351},
  URN =		{urn:nbn:de:0030-drops-133516},
  doi =		{10.4230/LIPIcs.ISAAC.2020.7},
  annote =	{Keywords: Geodetic set, Planar graph, Chordal graph, Interval graph, FPT algorithm}
}

@article{mohar2001face,
  title={Face covers and the genus problem for apex graphs},
  author={Mohar, B.},
  journal={Journal of Combinatorial Theory, Series B},
  volume={82},
  number={1},
  pages={102--117},
  year={2001},
  publisher={Elsevier}
}

@article{clark1990,
	title={Unit disk graphs},
	author={Clark, B. N. and Colbourn, C. J. and Johnson, D. S.},
	journal={Discrete mathematics},
	volume={86},
	number={1-3},
	pages={165--177},
	year={1990}
}

@article{thomassen1986,
	title={Interval representations of planar graphs},
	author={Thomassen, C.},
	journal={Journal of Combinatorial Theory, Series B},
	volume={40},
	number={1},
	pages={9--20},
	year={1986}
}

@inproceedings{umans1997,
  title={Hamiltonian cycles in solid grid graphs},
  author={Umans, C. and Lenhart, W.},
  booktitle={Proceedings 38th Annual Symposium on Foundations of Computer Science},
  pages={496--505},
  year={1997},
  organization={IEEE}
}

@article{keshav2012,
  title={A linear-time algorithm for the longest path problem in rectangular grid graphs},
  author={Keshavarz-Kohjerdi, F. and Bagheri, A. and Asgharian-Sardroud, A.},
  journal={Discrete Applied Mathematics},
  volume={160},
  number={3},
  pages={210--217},
  year={2012},
  publisher={Elsevier}
}

@inproceedings{feldmann2011,
  title={An $\mathcal{O}(n^4)$ Time Algorithm to Compute the Bisection Width of Solid Grid Graphs},
  author={Feldmann, A. E. and Widmayer, P.},
  booktitle={European Symposium on Algorithms},
  pages={143--154},
  year={2011},
  organization={Springer}
}

@BOOK{bookGC,
  title = {Geodesic Convexity in Graphs},
  publisher = {Springer},
  year = {2013},
  author = {I. M. Pelayo},
}

@inproceedings{caldam2020,
  author    = {D. Chakraborty and
               F. Foucaud and
               H. Gahlawat and
               S. K. Ghosh and
               B. Roy},
title="Hardness and Approximation for the Geodetic Set Problem in Some Graph Classes",
booktitle="Proceedings of the 6th International Conference on Algorithms and Discrete Applied Mathematics ({CALDAM}'20)",
volume="12016",
	series = {Lecture Notes in Computer Science},
year="2020",
publisher="Springer International Publishing",
address="Cham",
pages="102--115"
}

@inproceedings{kellerhals2020,
  author    = {L. Kellerhals and
               T. Koana},
  title     = {Parameterized Complexity of Geodetic Set},
  booktitle = {Proceedings of the 15th International Symposium on Parameterized and Exact Computation
               ({IPEC} 2020)},
  series    = {LIPIcs},
  volume    = {180},
  pages     = {20:1--20:14},
  publisher = {Schloss Dagstuhl - Leibniz-Zentrum f{\"{u}}r Informatik},
  year      = {2020},
  url       = {https://doi.org/10.4230/LIPIcs.IPEC.2020.20},
  doi       = {10.4230/LIPIcs.IPEC.2020.20},
  timestamp = {Thu, 16 Sep 2021 18:08:19 +0200},
  biburl    = {https://dblp.org/rec/conf/iwpec/KellerhalsK20.bib},
  bibsource = {dblp computer science bibliography, https://dblp.org}
}

@article{valiant1981,
  title={Universality considerations in {VLSI} circuits},
  author={Valiant, L. G.},
  journal={IEEE Transactions on Computers},
  volume={100},
  number={2},
  pages={135--140},
  year={1981},
  publisher={IEEE}
}

@article{ekim2014,
  title={Block decomposition approach to compute a minimum geodetic set},
  author={Ekim, T. and Erey, A.},
  journal={RAIRO-Operations Research},
  volume={48},
  number={4},
  pages={497--507},
  year={2014},
  publisher={EDP Sciences}
}

@article{harary1993,
	title={The geodetic number of a graph},
	author={Harary, F. and Loukakis, E. and Tsouros, C.},
	journal={Mathematical and Computer Modelling},
	volume={17},
	number={11},
	pages={89--95},
	year={1993},
	publisher={Elsevier}
}

@article{JCMCC96,
author="Douthat, A. L. and Kong, M. C.",
title="Computing geodetic bases of chordal and split graph",
year="1996",
voulume="22",
journal="Journal of Combinatorial Mathematics and Combinatorial Computing",
pages="67--77"
}

@ARTICLE{bus,
  author={Wang, C. and Song, Y. and Fan, G. and Jin, H. and Su, L. and Zhang, F. and Wang, X.},
  journal={IEEE Transactions on Mobile Computing}, 
  title={Optimizing Cross-Line Dispatching for Minimum Electric Bus Fleet}, 
  year={2021},
  volume={},
  number={},
  pages={1-1},
  doi={10.1109/TMC.2021.3119421}}

@article{dourado2010,
	title={Some remarks on the geodetic number of a graph},
	author={Dourado, M. C. and Protti, F. and Rautenbach, D. and Szwarcfiter, J. L.},
	journal={Discrete Mathematics},
	volume={310},
	number={4},
	pages={832--837},
	year={2010},
	publisher={Elsevier}
}

@inproceedings{dourado2008,
	title={On the complexity of the geodetic and convexity numbers of a graph},
	author={Dourado, M. C. and Protti, F. and Szwarcfiter, J. L.},
	booktitle={{Proceedings of the International Conference on Discrete Mathematics (ICDM)}},
	series = {RMS Lecture Notes Series},
	volume={7},
	pages={101--108},
	year={2008},
	publisher={Ramanujan Mathematical Society}
}

@article{farber1986,
	title={Convexity in graphs and hypergraphs},
	author={Farber, M. and Jamison, R. E.},
	journal={SIAM Journal on Algebraic Discrete Methods},
	volume={7},
	number={3},
	pages={433--444},
	year={1986},
	publisher={SIAM}
}

@inproceedings{ekim2012,
	title={Computing minimum geodetic sets of proper interval graphs},
	author={Ekim, T. and Erey, A. and Heggernes, P. and van’t Hof, P. and Meister, D.},
	booktitle={{Proceedings of the 10th Latin American Symposium on Theoretical Informatics (LATIN'12)}},
	series = {Lecture Notes in Computer Science},
	volume = {7256},
	pages={279--290},
	year={2012},
	organization={Springer}
}

@article{mezzini2018,
	title={Polynomial time algorithm for computing a minimum geodetic set in outerplanar graphs},
	author={Mezzini, M.},
	journal={Theoretical Computer Science},
	volume={745},
	pages={63--74},
	year={2018},
	publisher={Elsevier}
}

@article{bueno2018,
	title={On the hardness of finding the geodetic number of a subcubic graph},
	author={Bueno, L. R. and Penso, L. D. and Protti, F. and Ramos, V. R. and Rautenbach, D. and Souza, U. S.},
	journal={Information Processing Letters},
	volume={135},
	pages={22--27},
	year={2018},
	publisher={Elsevier}
}

@article{dh,
author = {Kant\'{e}, M. M. and Nourine, L.},
title = {Polynomial Time Algorithms for Computing a Minimum Hull Set in Distance-Hereditary and Chordal Graphs},
journal = {SIAM Journal on Discrete Mathematics},
volume = {30},
number = {1},
pages = {311-326},
year = {2016}
}

@article{wellpart,
author = {J. Ahn and L. Jaffke and O. Kwon and P. T. Lima},
title = {Well-partitioned chordal graphs},
journal = {Discrete Mathematics},
volume = {345},
number = {10},
pages = {112985},
year = {2022},
issn = {0012-365X}
}

@article{hernando2005steiner,
  title={On the Steiner, geodetic and hull numbers of graphs},
  author={Hernando, C. and Jiang, T. and Mora, M. and Pelayo, I. M. and Seara, C.},
  journal={Discrete Mathematics},
  volume={293},
  number={1-3},
  pages={139--154},
  year={2005},
  publisher={Elsevier}
}

@article{lima2022computational,
  title={On the Computational Complexity of the Strong Geodetic Recognition Problem},
  author={Lima, C. V. G. C. and Santos, V. F. dos and Sousa, Jo{\~a}o H. G. and Urrutia, S. A.},
  journal={arXiv preprint arXiv:2208.01796},
  year={2022}
}

@InProceedings{chakraborty2022,
  author =	{Chakraborty, D. and Dailly, A. and Das, S. and Foucaud, F. and Gahlawat, H. and Ghosh, S. K.},
  title =	{{Complexity and Algorithms for ISOMETRIC PATH COVER on Chordal Graphs and Beyond}},
  booktitle =	{33rd International Symposium on Algorithms and Computation (ISAAC 2022)},
  pages =	{12:1--12:17},
  series =	{Leibniz International Proceedings in Informatics (LIPIcs)},
  ISBN =	{978-3-95977-258-7},
  ISSN =	{1868-8969},
  year =	{2022},
  volume =	{248},
  publisher =	{Schloss Dagstuhl -- Leibniz-Zentrum f{\"u}r Informatik},
  address =	{Dagstuhl, Germany},
  URN =		{urn:nbn:de:0030-drops-172974},
  doi =		{10.4230/LIPIcs.ISAAC.2022.12},
  annote =	{Keywords: Shortest paths, Isometric path cover, Chordal graph, Interval graph, AT-free graph, Approximation algorithm, FPT algorithm, Treewidth, Chordality, Treelength}
}

@article{missingBipartite,
  title={Computing the geodetic number of bipartite graphs},
  author={Douthat, A. L. and Kong, M. C.},
  journal={Congressus Numerantium},
  pages={113--120},
  year={1995},
  publisher={UTILITAS MATHEMATICA PUBLISHING INC}
}

@inproceedings{davot2021approximation,
  title={On the approximation hardness of geodetic set and its variants},
  author={Davot, T. and Isenmann, L. and Thiebaut, J.},
  booktitle={Computing and Combinatorics: 27th International Conference, COCOON 2021, Tainan, Taiwan, October 24--26, 2021, Proceedings 27},
  pages={76--88},
  year={2021},
  organization={Springer}
}

@article{atici2003edge,
  title={On the edge geodetic number of a graph},
  author={Atici, M.},
  journal={International journal of computer mathematics},
  volume={80},
  number={7},
  pages={853--861},
  year={2003},
  publisher={Taylor \& Francis}
}

@article{santhakumaran2007edge,
  title={Edge geodetic number of a graph},
  author={Santhakumaran, A. P. and John, J.},
  journal={Journal of Discrete Mathematical Sciences and Cryptography},
  volume={10},
  number={3},
  pages={415--432},
  year={2007},
  publisher={Taylor \& Francis}
}

@article{rehmani2019edge,
  title={Edge geodesic number of a fuzzy graph},
  author={Rehmani, S. and Sunitha, M. S.},
  journal={Journal of Intelligent and Fuzzy Systems},
  volume={37},
  number={3},
  pages={4273--4286},
  year={2019},
  publisher={IOS Press}
}

@inproceedings{anand2018edge,
  title={The edge geodetic number of product graphs},
  author={Anand, B. S. and Changat, M. and Chandran, S. V. U.},
  booktitle={Algorithms and Discrete Applied Mathematics: 4th International Conference, CALDAM 2018, Guwahati, India, February 15-17, 2018, Proceedings},
  pages={143--154},
  year={2018},
  organization={Springer}
}

@article{chartrand2000geodetic,
  title={Geodetic sets in graphs},
  author={Chartrand, G. and Harary, F. and Zhang, P.},
  journal={Discussiones Mathematicae Graph Theory},
  volume={20},
  number={1},
  pages={129--138},
  year={2000},
  publisher={Uniwersytet Zielonog{\'o}rski. Wydzia{\l} Matematyki, Informatyki i Ekonometrii}
}

\end{document}